\documentclass[runningheads,envcountsame]{llncs}
\usepackage[T1]{fontenc}
\usepackage{graphicx}

\usepackage{tikz}
\usetikzlibrary{arrows,shapes,backgrounds,calc,positioning,arrows.meta,patterns,decorations.pathreplacing,patterns.meta,decorations.pathmorphing}

\newcommand{\includeTikzFigure}[1]{%
    \csname #1\endcsname
}

\usepackage[ruled,linesnumbered,noend]{algorithm2e} %
    \SetKwComment{Comment}{$\triangleright$\ }{}
    \SetKwProg{Fn}{function}{$\colon$}{end}
\usepackage{mathtools} %
\usepackage{todonotes} %
\usepackage{xspace}
\usepackage{cite}

\usepackage{amsfonts}
\usepackage{tabularx}

\newcommand{\abs}[1]{{\lvert #1 \rvert }}
\newcommand{\bigO}{\mathcal{O}}
\DeclareMathOperator{\setQ}{\mathbb{Q}}
\DeclareMathOperator{\setN}{\mathbb{N}}

\newcommand{\SPANNER}{\textsc{Spanner}\xspace}

\newcommand{\rmE}{N}

\newcommand{\trivial}{trivial\xspace}
\newcommand{\critical}{critical\xspace}

\newcommand{\decideTimeSymb}{\ensuremath{\mathcal{T}}}

\newcommand{\enc}[1]{\ensuremath{\langle #1 \rangle}}

\newcommand{\maxDeg}{\ensuremath{\Delta}}

\newcommand{\pathSetSymb}{\ensuremath{\mathcal{P}}}
\newcommand{\pathSet}[1]{\ensuremath{\pathSetSymb_{#1}}}

\newcommand{\universe}{\ensuremath{\mathcal{U}}}
\newcommand{\family}{\ensuremath{\mathcal{S}}}

\newcommand{\dist}[3]{\ensuremath{\ell_{#1}[#2,#3]}}

\newcommand{\noF}{\overline{F}}

\newcommand{\oneFunc}{\ensuremath{\mathbf{1}}}

\newcommand{\paramRE}{\ensuremath{\varrho}}%
\newcommand{\paramRW}{\ensuremath{\overline{W}}}%
\newcommand{\paramAP}{\ensuremath{\tau}}%
\newcommand{\paramFES}{\ensuremath{\varphi}}

\newcommand{\paramNTEiS}{\ensuremath{\mu}}%
\newcommand{\paramBW}{\ensuremath{b}}%
\newcommand{\paramBS}{\ensuremath{s}} %
\newcommand{\paramLNS}{\ensuremath{h}}

\newcommand{\bundlewidth}{bundle-breadth\xspace}
\newcommand{\bundlesize}{bundle-size\xspace}
\newcommand{\tightness}{tightness\xspace}

\newcommand{\lmax}{{\ell_{\max}}}

\newcommand{\APSP}{\mathrm{APSP}(n,m)}
\newcommand{\SPSP}{\mathrm{SPSP}(n,m)}
\newcommand{\APtwoSP}{\mathrm{APSP}_2(n,m)}
\newcommand{\APkSP}{\mathrm{APSP}_k(n,m)}

\newcommand{\NP}{\textbf{NP}}
\newcommand{\paraNP}{\textbf{para-NP}}
\newcommand{\Wone}{\textbf{W[1]}}
\newcommand{\Wtwo}{\textbf{W[2]}}

\newcommand{\green}{viable\xspace}

\usepackage{hyperref}
\usepackage{cleveref}
\usepackage{fontawesome5}
\usepackage{multirow}

\newcommand{\linkSymb}{\scalebox{0.7}{\color{red!70!gray}\faExternalLink*}} 
\newcommand{\myLink}[1]{\protect\hyperlink{#1}{\linkSymb}}

\newcounter{paramthm}[theorem]

\newcommand{\defn}[1]{\textcolor{blue!80!black}{\emph{#1}}}

\newcommand{\mySubparagraph}[1]{\smallskip\textbf{#1}}
\spnewtheorem{observation}[theorem]{Observation}{\bfseries}{\itshape}
\crefname{observation}{Observation}{Observations}

\usepackage{orcidlink}
\usepackage{comment}

\begin{document}

\title{Parameterized Complexity of Spanner Problems with Independent Weights and Lengths\thanks{supported by DFG Project \#517835933}}%
\titlerunning{Parameterized Complexity of Spanners with Indep.\ Weights and Lengths}
\author{Marius Bächler \orcidlink{0009-0003-2149-1478} \and
Markus Chimani \orcidlink{0000-0002-4681-5550} \and
Henning Jasper \orcidlink{0000-0002-9821-8600}
}
\authorrunning{M. Bächler, M. Chimani, and H. Jasper}
\institute{Institute of Computer Science, Osnabrück University, Germany\\
\email{\{mabaechler, markus.chimani, henning.jasper\}@uos.de}}

\maketitle              %
\vspace{-3mm}
\begin{abstract}

In this paper, the parameterized complexity of the multiplicative $\alpha$-spanner problem with independent weights and lengths on undirected graphs is considered for the first time.
All prior FPT results (except one on DAGs) assume \emph{basic} instances (i.e., with unit weights and lengths) and are parameterized in the stretch factor $\alpha$ and the (in practice typically non-constant) number of removed edges.

We show that several parameterizations do not allow FPT algorithms.
However, our \emph{exclusion approach} generalizes an existing algorithm for basic instances to arbitrary weights and lengths.
It is parameterized by the total removed weight and a new \emph{tightness} parameter.
The latter is more precise than $\alpha$ and allows us to also improve the best known result for basic instances.
Our second algorithm, called \emph{inclusion approach}, uses the natural parameterization in the spanner's total weight.
We prove that this sole parameter leaves a \Wtwo-hard problem, but also show FPT algorithms exist when augmented with secondary parameters.

\keywords{Graph spanners \and FPT algorithms \and W[2]-hardness.}
\end{abstract}
\section{Introduction}
Let $G=(V,E)$ be a connected undirected graph with edge weights $w \colon E \rightarrow \setN_{\geq 0}$ and edge lengths $\ell \colon E \rightarrow \setN_{\geq 1}$.
For a subgraph $G'\subseteq G$ and any $u,v \in V$, let $\dist{G'}{u}{v}$ be the length of a shortest path from $u$ to $v$ in $G'$, subject to $\ell$.
A subgraph $H=(V,F)$, $F\subseteq E$, is a (multiplicative) \defn{$\alpha$-spanner} for a (typically small constant) \hypertarget{stretch}{\defn{stretch factor}} $\alpha \geq 1$, if it satisfies the \defn{stretch property} $\dist{H}{u}{v} \leq \alpha \dist{G}{u}{v}$ for every \defn{terminal pair} $\{u,v\} \in K = \binom{V}{2}$. 
The stretch property holds for all node pairs even when only enforced for all adjacent pairs $K=E$~\cite{Peleg1989}.
The natural optimization problem is to find an $\alpha$-spanner of minimum weight $w(H) \coloneqq \sum_{e \in F} w(e)$. This gives rise to the canonical decision problem:
\begin{definition}[\SPANNER problem]
    Given an instance $I=(G,w,\ell,\alpha,W)$, with \hypertarget{weight}{\defn{weight bound}} $W\in\mathbb{N}$, is there an $\alpha$-spanner $H=(V,F)$ with $w(H)\leq W$?
\end{definition}
In many scenarios, 
$\alpha$ is not considered part of the input but a constant part of the problem definition.
See~\cite{Ahmed2020} for an overview on spanner literature.

Spanners in general graphs were first introduced in the context of communication networks~\cite{Peleg1989}.
They since found applications in various fields including approximate distance oracles, graph drawing, and access control hierarchies, see, e.g.,~\cite{Ahmed2020,wir2026GMSIP} for details. 
In practice, the weight and length of an edge are often entirely independent, we say \defn{decoupled}.
However, the literature often considers simplifying assumptions: 
In \defn{coupled} %
instances, each edge's weight equals its length, i.e., $ w(e) = \ell(e) > 0$, for all $e \in E$.
\defn{Unit-length} or \defn{unit-weight} instances assume $\ell(e) = 1$ or $w(e) = 1$ for all $e\in E$, respectively.
In \defn{basic} %
instances, we even assume %
$w(e)=\ell(e) = 1$ for all $e \in E$.
There are some exact ILP approaches and several approximation algorithms, see~\cite{Ahmed2020} for an overview and~\cite{chimani2022,jasper2024} for practical evaluations. %
However, only few approximations have guarantees for the decoupled setting~\cite{wir2025simple,chimani2014network,grigorescu2023approximation}, see also~\cite{wir2026GMSIP}.

Let $n\coloneqq |V|$ and $m\coloneqq |E|$.
Already the basic \SPANNER problem is \NP-hard for constant $\alpha \geq 2$~\cite{Peleg1989}, even on severely restricted graph classes~\cite{Gomez2023}, and hard to approximate~\cite{Kortsarz2001,Elkin2007}.
This motivates the consideration of \emph{fixed parameter tractable} (FPT) algorithms.
An FPT algorithm with parameter $k$ runs in $\bigO(f(k) \cdot p(\enc{I})$ time, where $f(k)$ is any computable function and $p(\enc{I})$ is a polynomial only dependent on the encoding length of the instance, see \cite{Downey1995} for %
details.
Note that $k$ may be a \defn{combined parameter}, i.e., an amalgamation of  several distinct parameters. A \defn{witness}---in our case a feasible spanner adhering to the restrictions given by the parameters---proves that a given instance is indeed a yes-instance.

Observe that $G$ is always a spanner w.r.t.\ itself and the spanner-property is supergraph-monotone. %
For basic \SPANNER instances, Kobayashi~\cite{Kobayashi2018} presents an FPT algorithm that is parameterized in $\alpha$ and the number $\paramRE$ of edges \emph{removed} from $G$ to obtain $H$ (i.e., we require $|E\setminus F|=\paramRE$).
The running time is 
$\bigO(\paramRE^{2\paramRE+3} \alpha^{\paramRE+1} (\alpha+1)^{\paramRE+1}+ nm)$.
In~\cite{Kobayashi2019}, they extend their work to basic linear $(\alpha, \beta)$-spanners and basic additive $+\beta$-spanners, which require $\dist{H}{u}{v} \leq \alpha \dist{G}{u}{v} + \beta$ and $\dist{H}{u}{v} \leq \dist{G}{u}{v} + \beta$, for all $\{ u, v \} \in \binom{V}{2}$, respectively.
Fomin et al.~\cite{Fomin2022} show that on \emph{directed} graphs, the basic \SPANNER problem parameterized in $\alpha$ and $\paramRE$ can be solved in  $(4\alpha)^{\paramRE} \cdot n^{\bigO(1)}$ randomized time.
For directed unit-weight spanners on directed \emph{acyclic} graphs (DAGs), they bound the running time by $\paramRE^{2\paramRE} \cdot n^{\mathcal{O}(1)}$.
They show that finding directed basic $+\beta$-spanners parameterized in $\paramRE$ is \Wone-hard even on DAGs for every fixed~$\beta \geq 1$.
Fluschnik~\cite{fluschnik2026} considers a variant of the \SPANNER problem in the context of safe bicycle network design (SBND), where bike-friendly edges can be seen as weight-$0$, but stretch requirements are only given for a specific subset $K\subseteq \binom{V}2$.
SBND is \paraNP-hard w.r.t.\ several (combined) parameters, typically using the fact that solutions do not need to establish the stretch property for all node-pairs. They propose two FPT algorithms, one w.r.t.\ $|K|$ and the feedback edge set number, the other w.r.t.\ $|E'|$ (where $E'$ are the non-bike-friendly edges).
In the context of spanners, both parameters would imply a constant instance size (given a straight-forward polynomial kernelization on the bike-friendly edges to obtain a complete subgraph on $V(E')$).

There are no FPT algorithms for non-basic instances, except for unit-weight DAGs~\cite{Fomin2022}. %
Moreover, all algorithms are parameterized in the number of removed edges $\paramRE$, which is impractical: spanners are typically used to sparsify a (possibly even asymptotically) much denser graph $G$, contradicting the assumption of a constant $\paramRE$. 
There are no FPT algorithms for the \emph{natural} parameterization, i.e., the objective function value $W$.
For unit-weight instances, this is the spanner's \defn{size} $|F| \geq n-1$.
Yet, this still leaves room for a derived parameter that counts the (possibly constant) number of spanner edges beyond such a linear lower bound.
We can leverage ``trivial'' edges---those that are contained in every optimum spanner---and parameterize by the number $\paramNTEiS$ of ``nontrivial'' spanner edges. 
A key example is the \defn{\textsc{SpannerAugmentation}} problem~\cite{elin2001client,Elkin2007}, where a fixed subgraph (e.g., spanning tree) is given and we add edges to establish the spanner property.
This is a special case of the decoupled \SPANNER problem, modeling fixed edges with weight~$0$.
The natural objective parameterization remains sensible: the spanner's weight may be constant despite having $\Omega(n)$ edges.

\newcommand{\critsym}{$\forall C$}
\begin{table}[t]
    \caption{
    Overview on parameterizations and results.
    $\APSP,\APtwoSP$ and $\SPSP$ bound the running times for all-pairs-, all-pairs-2-, and single-pair-shortest-path computations, respectively.
    By \Cref{cor:W}, the last four results (marked with \raisebox{-0.5mm}{$^\dagger$}) also work using parameter $W$ instead of~$\paramNTEiS$.
    }    
    \label{tab:results}
    \centering

                \resizebox{\textwidth}{!}{
\begin{tabular}{|c@{\,}c|c|p{13cm}|c|}
\hline
&& \textbf{symbol} &\textbf{ parameter description} & \textbf{link} \\
\hline
\multirow{7}{*}{\rotatebox{90}{\textbf{instance}}} &
\multirow{7}{*}{\rotatebox{90}{\textbf{dependent}}} &
$\alpha$ & stretch factor & \myLink{stretch}\\
&&$\paramAP$ &  \emph{\tightness}: max.\ \# of nontrivial edges along the tightest alternative path (\critsym) & \myLink{paramAP} \\
&&$\paramFES$ & feedback edge set number &  \myLink{paramFES} \\
&&$\paramBW$ & \emph{\bundlewidth}: max.\ \# of settling paths (\critsym) & \myLink{paramBW} \\
&&$\paramBS$ & \emph{\bundlesize}: max.\ \# of nontrivial edges on settling paths (\critsym) & \myLink{paramBS} \\
&&$\paramLNS$ & \emph{local neighborhood size}: max.\ \# of nodes on settling paths (\critsym) & \myLink{paramLNS} \\
&&$\maxDeg$ & max.\ node-degree of $G$ &  \textbf{---}\\ %
\hline
\multirow{4}{*}{\rotatebox{90}{\textbf{solution}}} &
\multirow{4}{*}{\rotatebox{90}{\textbf{depend.}}} &
$\paramRE$ & (min) \# of nontrivial edges removed from $G$ & \textbf{---}\\ %
&&$\paramRW$ & (min)\ total weight removed from $G$  & \textbf{---}\\ %
&&$\paramNTEiS$ & (max) \# nontrivial edges in $H$ &  \textbf{---}\\ %
&&$W$ & (max) total weight of $H$  & \textbf{---}\\ %
\hline
\multicolumn{5}{r}{(\critsym = over all critical terminal pairs)}
\end{tabular}
    }

\medskip
    \resizebox{\textwidth}{!}{
    \begin{tabular}{|lll|c|c|c|c|}
    \hline
        \multicolumn{3}{|c|}{\textbf{parameter(s)}} & \textbf{variant} & \textbf{known result} &  \textbf{our contribution}& \textbf{ref.}\ \\
        \hline
        \hline
        & \multicolumn{2}{l|}{$\paramAP$ or $\alpha$} & basic & \paraNP-hard~\cite{Peleg1989,Gomez2023} & \textbf{---} & \textbf{---}\\
        \hline
        $\paramRW$ & $\paramAP$ & & decoupled & \textbf{---} &
        $\bigO\big( \paramRW^{2\paramRW+1} \paramAP^{2\paramRW} \cdot \SPSP + \APtwoSP \big)$ & Thm.\ \ref{thm:exclude} \\
        $\paramRW$ & $\alpha$ & & coupled & \textbf{---} & $\bigO\big( \paramRW^{4\paramRW+1} \alpha^{2\paramRW} \cdot \SPSP+ \APtwoSP \big)$ & Cor.\ \ref{cor:exCoupled} \\
        $\paramRW$ & $\alpha$ & & unit-length & \textbf{---} & $\bigO\big( \paramRW^{2\paramRW+1} \alpha^{2\paramRW}\cdot m +nm \big)$ & Cor.\ \ref{cor:unitlengthex} \\
        $\paramRE$ & $\paramAP$ & & unit-weight & \textbf{---} & $\bigO\big(\paramRE^{2\paramRE+1} \paramAP^{2\paramRE} \cdot \SPSP+ \APtwoSP \big)$ & Cor.\ \ref{cor:exUnitWeight} \\
        $\paramFES$ & $\paramAP$ & & unit-weight & \textbf{---} & $\bigO\big(\paramFES^{2\paramFES+2} \paramAP^{2\paramFES} \cdot \SPSP+ \APtwoSP \big)$ & Cor.\ \ref{cor:exFES} \\
        $\paramRE$ & $\alpha$ & & basic & $\bigO\big(\paramRE^{2\paramRE+3}\alpha^{\paramRE+1}(\alpha+1)^{\paramRE+1} + nm\big)$~\cite{Kobayashi2018} & 
        $\bigO\big( \paramRE^{2\paramRE+3} \alpha^{2\paramRE+2}+nm \big)$ & Cor.\ \ref{cor:exBasic}\\

        \hline
        \hline
        \multicolumn{2}{|l}{$\paramNTEiS$ or $W$}  & & unit-length & \textbf{---} & \Wtwo-hard & Thm.\ \ref{thm:W2hard} \\

         & \multicolumn{2}{l|}{$\paramBW$ or $\paramBS$ or $\paramLNS$} & basic & \textbf{---} & \paraNP-hard & Obs.\ \ref{obs:unitlength} \\
         
         \hline
         $\paramNTEiS$ & $\paramBW$ & & decoupled & \textbf{---} & $\bigO\big(\paramBW^{\paramNTEiS} \cdot \APSP + nm+n^2\log n\big)$ & Thm.\ \ref{thm:f-bw} $^\dagger$ \\

        $\paramNTEiS$ & $\paramBS$ & & decoupled & \textbf{---} & $\bigO\big(\paramBS^{\paramNTEiS} \cdot \APSP + m^3+m^2 n \log n\big)$ & Thm.\ \ref{thm:f-bs} $^\dagger$\\

         $\paramNTEiS$ & $\paramLNS$ & & decoupled & \textbf{---} & $\bigO\big(\paramLNS^{2\paramNTEiS} \cdot \APSP + m^3+m^2 n \log n\big)$ & Cor.\ \ref{cor:lns} $^\dagger$\\

        $\paramNTEiS$ & $\alpha$ & $\maxDeg$ & unit-length & \textbf{---}  & $\bigO\left(((\maxDeg-1)^{\lceil \alpha / 2 \rceil} +1)^{\paramNTEiS} \cdot nm+n^2\log n\right)$ & Cor.\ \ref{cor:ulengthfpt} $^{\dagger}$ \\
        \hline
    \end{tabular}
    }

\end{table}

\mySubparagraph{Contribution.} 
We present the first FPT results for the decoupled \SPANNER problem.
Our first parametrization (\Cref{section:alg_exclusion}) is a generalization of Kobayashi's parametrization for basic \SPANNER instances~\cite{Kobayashi2018}.
The central parameter is the minimum weight $\paramRW$ to be \emph{removed} from $G$ to yield~$H$, combined with a parameter $\paramAP$ bounding the number of nontrivial edges on an alternative path (formally defined later). For the basic scenario, $(\paramRW,\paramAP)$ becomes $(\paramRE,\alpha)$ as considered in~\cite{Kobayashi2018}, where we improve upon their running time.
Our second parameterization (\Cref{section:alg_inclusion}) considers the number of nontrivial edges $\paramNTEiS$ to be \emph{contained} in the spanner, combined with a locality parameter (\bundlewidth~$\paramBW$, \bundlesize~$\paramBS$, or local neighborhood size~$\paramLNS$, see later for details).

Further, we show that fixing the parameters only individually still leaves hard problems.
Combined parameters are thus indeed unavoidable.
For both algorithms, we show that in certain scenarios (e.g. coupled instances) their running time can be improved and/or they can be understood to be parameterized by simpler parameters. See \Cref{tab:results} for an overview on our parameters and results.

\section{Preliminaries}
\label{section:pre}
For any edge set $X \subseteq E$ and function $f\colon E \to \setQ$, we denote $f(X) \coloneqq \sum_{e \in X} f(e)$ and $f_{\min}(X) \coloneqq \min_{e \in X}(f(e))$.
Similarly, for any path $P$ and function $f$, we denote $f(P) \coloneqq f(E(P))$.
We denote the constant function $\oneFunc \colon E \to \{1\}$.
For a terminal pair $\{u,v\} \in K=E$, a $uv$-path $P$ is a simple path that goes from $u$ to $v$.
It is \defn{settling} if it satisfies the stretch property for $\{u,v\}$, i.e., $\ell(P) \leq \alpha \cdot \dist{G}{u}{v}$. 
Observe that, in general, the edge $\{u,v\}$ may not always form a settling $uv$-path.
We only have to actively enforce the stretch property for all \emph{metric} edges $\{u,v\} \in E$ with $\ell(\{u,v\}) = \dist{G}{u}{v}$. If a nonmetric edge does not even settle itself, we can remove the edge from the instance as a preprocessing step.
$P$ is an \defn{alternative path} if it does not contain the edge $\{u,v\}$ itself.
The set of all settling paths is denoted by $\pathSet{uv}$.
If $\pathSet{uv}$ contains no alternative path, then the edge $\{u,v\}$ is
\defn{mandatory} and part of every feasible spanner.
\SPANNER instances typically contain many \hypertarget{triv}{\defn{\trivial}} edges $T \subseteq E$, i.e., edges that w.l.o.g.\ can always be assumed to be part of an optimum solution.
Besides mandatory edges, these include all edges $e \in E$ with $w(e)=0$.
The latter numerously occur in \textsc{SpannerAugmentation} problems (see above).
Further, studies of exact algorithms for coupled \SPANNER instances suggest that, especially for small $\alpha$, many edges are mandatory in practice~\cite{jasper2024}.
The inclusion of all trivial edges already settles some terminal pairs, i.e., those connected by a settling path that only consists of trivial edges.
Terminal pairs $C \subseteq K$ not settled by trivial edges are called \defn{\critical} and must be settled by the inclusion of \defn{nontrivial} edges $N \coloneqq E\setminus T$. 
Observe that $C\subseteq N$. See~\Cref{paramEG} for a small  example showcasing these and subsequent definitions.

\begin{figure}[tb]
    \centering
    \resizebox{1.2\textwidth}{!}{\includeTikzFigure{paramEG}}
    \caption{Example for definitions, with $\alpha=2$. Edges have length and weight $1$, except that the green edge has weight $0$. Trivial edges $T$ are thick (thick black are even mandatory). 
    By symmetry, the values $\paramAP,\paramLNS,\paramBS,\paramBW$ are induced by both elements of $C$; we list them explicitly for $\{x,y\}\in C$.
    The red shading shows the two paths in $\mathcal{P}_{xy}$.
} %
    \label{paramEG}
\end{figure}

\mySubparagraph{Shortest Paths.}
We can always solve the \defn{all-pairs-shortest-path (APSP)} problem in $\bigO(nm+n^2\log n)$ time by using Dijkstra's $\bigO(m+n\log n)$ time algorithm for single-source-shortest-paths (SSSP) for all sources $v \in V$.
Improved running times are possible for specific graph densities~\cite{duan2025breaking} or length functions; e.g., unit-lengths allow $\bigO(nm)$ via breadth-first-search (BFS).
Repeatedly solving similar instances may allow faster running times via \emph{dynamic} shortest paths~\cite{demetrescu2004new,garg2021dynamizing}.
We use the monikers $\APSP$ and $\SPSP$ to denote the running time of suitable algorithms on a graph with $n$ nodes and $m$ edges.

For $k \in \mathbb{N}$, we use the generalization~\cite{Eppstein1997} of Dijkstra's SSSP algorithm to solve the single-source-$k$-shortest-paths problem in $\bigO(m+n\log n+nk)$ time, and consequently, the \defn{all-pairs-$k$-shortest-paths (AP$k$SP)} problem in $\APkSP \in \bigO(nm+n^2\log n + n^2 k)$ time.
For unit-length instances and $k=2$, this algorithm achieves $\bigO(nm)$ time by using $\bigO(1)$-time priority queues.
To detect mandatory edges, we use these algorithms with $k=2$ and check for which terminal pairs the second path is still settling.
As all weight-$0$ edges can be found in time $m$, all trivial edges $T \subseteq E$ can be found in $\APtwoSP \in \bigO(nm+n^2\log n)$ time.

\mySubparagraph{Independent Set.}
In \Cref{section:alg_exclusion}, we will require a fact about independent sets that seems not to be readily available in known literature. Recall that, given a graph $G$, an independent set $U\subseteq V$ is a node subset such that no two elements of $U$ are adjacent. See Appendix~\ref{apx:MissingProofs} for the full proof.
\begin{lemma}\label{lem:IS}
    Let $G=(V,E)$ be an $n$-node graph with $|V(G[V'])| \geq |E(G[V'])|$ for all $V' \subseteq V$, where $G[V'] \subseteq G$ is the node-induced subgraph. Then $G$ has an independent set of size at least $\lceil \frac{n}{3}\rceil$, which can be found in $\bigO(n)$ time.
\end{lemma}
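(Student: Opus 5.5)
The hypothesis says that every node-induced subgraph has at most as many edges as nodes. I will therefore treat $G$ as a \emph{pseudoforest}: each connected component has at most one cycle. The plan is to show that each connected component $K$ with $n_K$ nodes contains an independent set of size at least $\lceil n_K/3\rceil$. Summing over components then gives at least $\sum_K \lceil n_K/3\rceil \geq \lceil n/3\rceil$, as required.

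\textbf{Structure.} Apply the hypothesis with $V'=V(K)$. This gives $|E(K)|\leq n_K$, and since $K$ is connected, $K$ is either a tree ($|E(K)|=n_K-1$) or a tree plus one edge, i.e., a unicyclic graph.

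\textbf{Trees.} A tree is bipartite, so the larger colour class of a 2-colouring is an independent set of size at least $\lceil n_K/2\rceil \geq \lceil n_K/3\rceil$. A BFS from an arbitrary root computes this colouring in $\bigO(n_K)$ time.

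\textbf{Unicyclic components.} Find the unique cycle in linear time, e.g.\ by repeatedly deleting leaves; the nodes that remain form the cycle. If the cycle has even length, $K$ is bipartite and the tree argument applies. If the cycle has odd length, pick one cycle edge $\{a,b\}$ and delete it to obtain a spanning tree $K'$. The endpoints $a$ and $b$ then lie in the same colour class of $K'$: their tree path is the rest of the cycle, which has even length. Each colour class of $K'$ is independent in $K$ unless it contains both $a$ and $b$. So take the class $X$ not containing $a,b$ if it is the larger one, giving at least $\lceil n_K/2\rceil$ nodes. Otherwise take $Y\setminus\{a\}$, where $Y$ is the larger class; this has size at least $\lceil n_K/2\rceil-1$.

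It remains to check that $\lceil n_K/2\rceil-1\geq\lceil n_K/3\rceil$ whenever $K$ contains an odd cycle, which forces $n_K\geq 3$. This fails for small $n_K$: for a triangle, $\lceil 3/2\rceil-1=1=\lceil 3/3\rceil$ holds, but for $n_K=4$ we get $\lceil 4/2\rceil-1=1<2=\lceil 4/3\rceil$. I expect this small-case accounting to be the main obstacle. The better bound is to use both classes: since $|X|+|Y\setminus\{a\}|=n_K-1$, the larger of the two independent sets has size at least $\lceil (n_K-1)/2\rceil$.

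One checks that $\lceil (n_K-1)/2\rceil\geq\lceil n_K/3\rceil$ for all $n_K\geq 3$:
\begin{itemize}
\item for $n_K=3,4,5$ the two sides are $1,2,2$ versus $1,2,2$;
\item for $n_K\geq 5$ one has $(n_K-1)/2\geq n_K/3$, and taking ceilings preserves the inequality.
\end{itemize}
If rounding makes the general step awkward, I would check a few residues modulo $6$ directly.

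\textbf{Running time.} Every step (the component search, leaf peeling, BFS colouring, and choosing the larger class) is linear in the component size. Because $|E|\leq n$, the total running time is $\bigO(n+m)=\bigO(n)$.
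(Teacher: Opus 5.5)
Your proof is correct and follows essentially the same route as the paper. The paper also splits into components, uses bipartiteness for trees, and for a unicyclic component obtains an independent set of size $\lceil (n_K-1)/2\rceil$ by deleting one cycle node; your classes $X$ and $Y\setminus\{a\}$ are exactly the two colour classes of the forest $K-a$, and your direct check that $\lceil (n_K-1)/2\rceil\ge\lceil n_K/3\rceil$ for all $n_K\ge 3$ simply replaces the paper's separate treatment of $n\le 4$.
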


\section{Exclusion Algorithm}\label{section:alg_exclusion}
For a given spanner $H=(V,F)$ of a graph $G=(V,E)$, let $\noF\coloneqq E\setminus F$ be the edges \emph{removed} from $G$ to yield $H$.
In~\cite{Kobayashi2018}, Kobayashi presented an FPT algorithm for the \emph{basic} \SPANNER (i.e., unit weights and unit lengths) problem w.r.t.\ the combined parameter $(\paramRE',\alpha)$ where $\paramRE'$ is a lower bound on $|\noF|$; i.e., we ask whether there exists a spanner where at least $\paramRE'$ edges are removed. Originally, the algorithm was described as parameterized by the single parameter $\paramRE'$ but assuming constant $\alpha$; indeed the stronger result as a combined parameter holds.

We generalize the FPT algorithm to the decoupled scenario, using the combined parameter $(\paramRE,\paramAP)$: 
Here, \hypertarget{paramRE}{$\paramRE\leq \paramRE'$} is the number of \emph{nontrivial} edges removed from $G$\footnote{Indeed, the algorithm~\cite{Kobayashi2018} could, without any real effort, also be seen as being parameterized in $\paramRE$ instead of the formally slightly weaker $\paramRE'$.}.
For each critical terminal pair $\{u,v\}\in C$, let the \defn{tightest} alternative path $A_{uv}\in\pathSet{uv}$ be an alternative path with the minimum number of nontrivial edges. Let $A^N_{uv}\coloneqq A_{uv}\cap N$ denote its nontrivial edges. %
The \hypertarget{paramAP}{\defn{\tightness}} $\paramAP:=\max\{ |A^N_{uv}| : \{u,v\}\in C\}$, is the maximum number of nontrivial edges over any such path. 
For all $\{u,v\} \in C$, $|A^N_{uv}| \geq 1$ as otherwise $\{u,v\} \notin C$.

We argue that the \tightness $\paramAP$ is a ``better'' parameter than $\alpha$ as it precisely captures $\alpha$'s role in the algorithm.
On unit-length (and thus also basic) \SPANNER instances we have $\paramAP\leq\alpha$.
As Kobayashi's algorithm~\cite{Kobayashi2018} hinges on this property, it requires the parameterization (or constant) $\alpha$.
By making $\paramAP$ explicit, we are able to improve on their result.
Further, we are able to handle instances with non-constant $\alpha$ given that the graph is dense enough to allow a constant tightness.
\Cref{obs:largealpha} showcases such instances that are indeed hard.

Kobayashi's algorithm~\cite{Kobayashi2018} computes a \emph{kernel}, i.e., an answer-equivalent problem instance of size only dependent on the parameters. As they consider basic instances $(G=(V,E),w=\oneFunc,\ell=\oneFunc,\alpha,W)$, the only trivial edges are those with no alternative path.
They show that by simply removing these edges from $G$, one obtains an equivalent instance $( G'=(V,N),w'=\oneFunc,\ell'=\oneFunc,\alpha,W)$ that allows a spanner $H'\subseteq G'$ with $W=|E(H')|=|N|-\paramRE$ if and only if $H'+T$ is a spanner in $G$ with $W=|E(H'+T)|=|E|-\paramRE$. They then show that either $|N|$ is bounded by a function of $(\paramRE,\alpha)$ (we have a kernel), or the answer is YES.
However, for decoupled \SPANNER instances, this kernelization step is no longer correct.
We show that the core ideas can be adapted using more finely tuned arguments.

In the unit-weight case, the parameter $\paramRE$ is the complement of the objective function:
Consider the largest $\paramRE$ for which an instance parameterized by $\paramRE$ is a yes-instances; it allows for no spanner with fewer than $|F|=|E|-\paramRE$ edges.
For the weighted case, we analogously parameterize in \hypertarget{paramRW}{$\paramRW$}, i.e., the required total weight of the removed edges.
Clearly, for unit-weights $\paramRW = \paramRE$.
We describe our algorithm using \paramRW, understanding that for unit-weights this reduces to \paramRE.

\mySubparagraph{Algorithm parameterized by $(\paramRW,\paramAP)$.}
Since we never remove trivial edges, we can focus on nontrivial edges $N$.
Similar to~\cite{Kobayashi2018}, our algorithm builds on a case distinction on $|N|$:
if it is small, i.e., $\abs{N} \leq \paramAP^2\paramRW^2$, 
we simply enumerate all feasible spanners. 
Otherwise, we always answer YES, since we can remove edges of sufficient total weight while maintaining a feasible spanner.

Extracting a spanner in the latter case requires some work; see \Cref{alg:findWitness}. 
For each $f=\{u,v\}\in C$, our witness either contains $f$ or its tightest alternative path. 
When removing any $e \in E$, the spanner not only needs to retain $A_e$, but also every $e' \in E$ whose tightest alternative path $A_{e'}$ contains~$e$.
We always start by computing $A_{uv}$ for each critical pair $\{u,v\} \in C$. 
Then, depending on whether $\paramAP=1$ or $\paramAP\geq 2$, we perform vastly different algorithmic methods.
To improve running times for basic instances, we use a tighter bound for $|N|$ (compared to~\cite{Kobayashi2018}).
This requires a more intrinsic argument to show the existence of a witness. 
Note that for basic instances $\paramAP=1$ does not arise as it resembles $\alpha=1$. %

\newcommand{\auxG}{{\tilde G}}
Consider the case $\paramAP=1$. We construct an auxiliary conflict graph $\auxG\coloneqq (N, \bigcup_{\{u,v\}\in C, e\in A^N_{uv}} \{ \{u,v\}, e \})$ that uses the nontrivial edges as its node set;
for each critical pair $\{u,v\}\in C$ and its nontrivial edge $e\in A^N_{uv}$ (note that $|A^N_{uv}|=\paramAP=1$), we join them by an edge in $\auxG$. Each node in $\auxG$ is responsible for at most one edge in $\auxG$, while a single edge in $\auxG$ exists due to either one or two nodes of $\auxG$. Thus, $\auxG$ satisfies the prerequisites of \Cref{lem:IS}. Consider obtaining a subgraph $H$ of $G$ by removing some edge $e\in N$. By retaining all edges in $H$ that are adjacent to $e$ in $\auxG$, we guarantee that $H$ is a feasible spanner. In other words, we can remove an independent set in $\auxG$ from $G$ and guarantee that the resulting graph is a feasible spanner.
By \Cref{lem:IS}, we can find an independent set of  size at least $\lceil\frac{|N|}{3}\rceil$ in~$\auxG$. 
Since $|N|\geq\paramRW^2+1$, we thus remove at least $\lceil\frac{\paramRW^2 + 1}{3}\rceil\geq \paramRW$ many nontrivial edges, each of which has weight at least $1$.

Now, consider the case $\paramAP\geq 2$.
We construct the spanner by iteratively removing edges from the original graph. 
To this end, we first only decide on \emph{removal candidates} $R_i \subseteq N$, which in turn determine \emph{locked edges} $L_i \subseteq E$.
Let $L_0 \coloneqq \emptyset$. For $i=1,\ldots,\paramRW$, let $R_i$ be an arbitrary set of  $((\paramRW-i)\paramAP+1)$-many 
edges from $N \setminus L_{i-1}$ (we later prove its existence).
We set $L_i \coloneqq L_{i-1} \cup \{ A^N_{uv} \cup \{ \{u,v\} \} \mid \{u,v\} \in R_i\}$.
Then, \emph{in reverse order} $i=\paramRW,\ldots,1$, we pick an edge between a critical terminal pair $\{u_i,v_i\} \in R_i \setminus \bigcup_{j>i} A^N_{u_jv_j}$, i.e., a nontrivial candidate edge of $R_i$ that does not lie on a shortest alternative path of any of the previously picked edges.
Such a pick is always possible, since $|R_i|=(\paramRW-i)\paramAP+1 > \sum_{j>i}\paramAP\geq \sum_{j>i} |A^N_{u_jv_j}|\geq \big|\bigcup_{j>i} A^N_{u_jv_j}\big|$ is large enough. 
By removing all so-picked edges, we yield a spanner that is feasible and removes at least total weight $\paramRW$.

\begin{theorem}\label{thm:exclude}
    The above algorithm (see also Appendix~\ref{apx:Ex}) is an FPT algorithm that decides the \SPANNER problem parameterized by the total removed weight \paramRW and the instance's \tightness $\paramAP$ in 
    $\decideTimeSymb\in\bigO\big( \paramRW^{2\paramRW+1} \paramAP^{2\paramRW} \cdot \SPSP + \APtwoSP \big)$
    time and identifies a witness (if it exists) in $\bigO(\decideTimeSymb+n^2m^2)$ time.
\end{theorem}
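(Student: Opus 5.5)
The proof splits into three parts: correctness of the YES answer when $|N|>\paramAP^2\paramRW^2$, correctness and cost of the enumeration when $|N|\leq\paramAP^2\paramRW^2$, and the preprocessing and witness-extraction overhead.

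\emph{Preprocessing.} We compute all trivial edges $T$, the critical pairs $C$, and $N$. As described in the preliminaries, this takes $\APtwoSP$ time using the 2-shortest-paths algorithm together with a scan for weight-$0$ edges. This term appears additively in $\decideTimeSymb$. We never need to remove trivial edges: if a spanner $H$ removes total weight $\paramRW$, then adding back all trivial edges keeps it a spanner (supergraph monotonicity) and does not decrease the removed weight below that of its nontrivial removals. So w.l.o.g.\ $\noF\subseteq N$. Moreover $\noF\subseteq N$ implies $|\noF|\leq\paramRW$ for a minimal witness, because every nontrivial edge has weight at least $1$ (weight-$0$ edges are trivial).

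\emph{Small case, $|N|\leq\paramAP^2\paramRW^2$.} We enumerate all subsets $X\subseteq N$ with $|X|\leq\paramRW$ and $w(X)\geq\paramRW$. Restricting to inclusion-minimal sets of this kind is enough, since a superset of a feasible removal is not needed. The number of candidates is $\sum_{k\leq\paramRW}\binom{|N|}{k}\in\bigO(\paramRW\cdot(\paramAP\paramRW)^{2\paramRW})=\bigO(\paramRW^{2\paramRW+1}\paramAP^{2\paramRW})$.

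For each candidate $X$ we must check feasibility of $G-X$. Only pairs in $X$ can lose their stretch property. Indeed, pairs in $C\setminus X$ keep their own edge. Pairs settled by trivial edges keep those edges. Pairs in $N\setminus X$ that are non-metric are settled via endpoints of metric edges, using Peleg's observation that enforcing $K=E$ suffices. So we run one single-pair shortest path query per edge of $X$. Naively this gives $\paramRW\cdot\SPSP$ per candidate. To reach the stated bound $\paramRW^{2\paramRW+1}\paramAP^{2\paramRW}\cdot\SPSP$, I would absorb this factor $\paramRW$ by a slightly finer count, namely $\binom{|N|}{\leq\paramRW}\leq(\paramAP\paramRW)^{2\paramRW}$ up to constants, times $\paramRW$ checks. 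This accounting is the only delicate step here.

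\emph{Large case, $|N|>\paramAP^2\paramRW^2$.} We answer YES and prove correctness via the construction described above. For $\paramAP=1$, the auxiliary conflict graph $\auxG$ has at most one edge charged to each node. Hence every induced subgraph has no more edges than nodes, and \Cref{lem:IS} yields an independent set $I$ with $|I|\geq\lceil|N|/3\rceil\geq\paramRW$. Removing $I$ is feasible: for each removed critical pair $f$, the unique nontrivial edge of $A_f$ is adjacent to $f$ in $\auxG$, so it is retained. The trivial edges of $A_f$ are never removed. Hence $A_f$ survives and settles $f$.

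For $\paramAP\geq2$, the main obstacle is showing that $R_i$ exists, i.e., $|N\setminus L_{i-1}|\geq(\paramRW-i)\paramAP+1$. Each $R_j$ locks at most $|R_j|(\paramAP+1)$ edges, so
\[
|L_{i-1}|\leq(\paramAP+1)\sum_{j<i}\big((\paramRW-j)\paramAP+1\big)\leq(\paramAP+1)\,\paramRW(\paramRW\paramAP)\leq\paramAP^2\paramRW^2 ,
\]
where the last inequality needs a careful estimate using $\paramAP\geq2$ and $i\leq\paramRW$. A finer summation such as $\sum_{j<i}(\paramRW-j)\leq\paramRW^2/2$ gives slack, and this slack is exactly what makes the $\paramAP\geq2$ assumption useful. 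The reverse-order pick then works by the counting inequality stated in the text. A pick also always lies in $C$: I would choose $R_i$ among critical pairs, which is possible since $C\subseteq N$ and edges of $N\setminus C$ can be removed freely anyway.

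\emph{Feasibility of the picks.} The removed edges are $\{u_i,v_i\}$ for $i=1,\ldots,\paramRW$, and each $A_{u_iv_i}$ survives. Edges of $A^N_{u_iv_i}$ are in $L_i\subseteq L_{j-1}$ for $j>i$, so no later $R_j$ contains them. For $j<i$, the reverse-order rule excludes them from being picked. The removed weight is at least $\paramRW$.

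\emph{Witness extraction.} Computing all tightest alternative paths $A_{uv}$ requires, per critical pair, a shortest-path computation with lexicographic costs (number of nontrivial edges, then length) restricted to settling paths. This is a resource-constrained problem, solvable by a DP over $(\text{node},\#\text{nontrivial edges})$ with at most $m$ layers. Doing this for $m$ pairs gives the $\bigO(n^2m^2)$ overhead in the witness bound.
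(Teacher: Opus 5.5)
Your overall structure matches the paper's proof: the same preprocessing, the same enumeration count $\bigO(|N|^{\paramRW})\subseteq\bigO((\paramAP\paramRW)^{2\paramRW})$ with $\paramRW$ single-pair checks per candidate, the same conflict-graph argument for $\paramAP=1$, and the same candidate/locked-edge construction for $\paramAP\geq2$. Your feasibility argument for the reverse-order picks is correct. \textbf{There is a gap, though, in the step you yourself call the main obstacle: showing that the sets $R_i$ exist.}

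First, the last inequality of your displayed chain, $(\paramAP+1)\paramRW(\paramRW\paramAP)\leq\paramAP^2\paramRW^2$, is false for all $\paramAP,\paramRW\geq1$, because the left side equals $\paramAP(\paramAP+1)\paramRW^2$. Second, your target is wrong. Even a valid bound $|L_{i-1}|\leq\paramAP^2\paramRW^2$, combined with $|N|>\paramAP^2\paramRW^2$, only gives $|N\setminus L_{i-1}|\geq1$. For $i<\paramRW$ you need $(\paramRW-i)\paramAP+1$ free edges.

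The paper fixes both points as follows. Since $R_i\cap L_{i-1}=\emptyset$ and $R_i\subseteq L_i$, you have $|L_{i-1}|+|R_i|\leq|L_i|\leq|L_{\paramRW-1}|$ for every $i<\paramRW$. So it suffices to show $|L_{\paramRW-1}|<|N|$, which also covers the singleton $R_{\paramRW}$. With the exact sum $\sum_{j=1}^{\paramRW-1}|R_j|=\frac12\paramAP\paramRW^2-\frac12\paramAP\paramRW+\paramRW-1$ one gets
\[
|L_{\paramRW-1}|\leq(\paramAP+1)\Big(\tfrac{1}{2}\paramAP\paramRW^2-\tfrac{1}{2}\paramAP\paramRW+\paramRW-1\Big)=\tfrac{1}{2}\paramAP^2\paramRW^2-\tfrac{1}{2}\paramAP^2\paramRW+\tfrac{1}{2}\paramAP\paramRW-\paramAP+\tfrac{1}{2}\paramAP\paramRW^2+\paramRW-1 .
\]
For $\paramAP\geq2$, the terms $-\frac12\paramAP^2\paramRW+\frac12\paramAP\paramRW$, $-\paramAP$ and $-1$ are nonpositive. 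Also $\frac12\paramAP\paramRW^2\leq\frac14\paramAP^2\paramRW^2$ and $\paramRW\leq\frac14\paramAP^2\paramRW^2$. Together these give $|L_{\paramRW-1}|\leq\paramAP^2\paramRW^2<|N|$. Without this computation, or an equivalent exact one, you have not shown that $|N|>\paramAP^2\paramRW^2$ always gives a YES-instance. That claim is what makes the decision procedure correct.

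Two smaller points. Your restriction of $R_i$ to critical pairs is not justified as stated: $C\subseteq N$ gives no lower bound on $|C|$. It is also unnecessary. A nontrivial non-critical pick is already settled by trivial edges. Your own disjointness argument (a pick in $R_i$ avoids $L_{i-1}$ and the $A^N$ of all higher-indexed picks) already ensures no pick destroys another pick's tightest alternative path. So $R_i$ may be any subset of $N\setminus L_{i-1}$.

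Your witness-time argument takes a different but valid route. The paper gets exact tightest alternative paths from a constrained-shortest-path FPTAS with $\varepsilon=\frac{1}{n+1}$, in $\bigO(mn^2)$ time per pair. Your layered DP is more elementary: work in $G-\{u,v\}$, run one Dijkstra over trivial edges per count of nontrivial edges, use at most $n-1$ layers, and stop at the first layer that reaches $v$ within $\alpha\,\dist{G}{u}{v}$. This gives $\bigO(n(m+n\log n))$ per pair, well within the $\bigO(n^2m^2)$ total. Describe it as minimizing the nontrivial-edge count under a length constraint, not as a lexicographic problem. Also note that optimal walks can be shortcut to simple paths, since all lengths are positive.
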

\begin{proof}
Identifying the (non)trivial edges takes $\APtwoSP \in \bigO(nm + n^2 \log n)$ time (cf.\ \Cref{section:pre}).
First, we consider the case $|N| \leq \paramAP^2\paramRW^2$.
Since $w(e)\geq 1$ for all $e \in N$, at most $\paramRW$-many edges must be removed.
Hence, there are at most $\sum_{i=1}^{\paramRW} \binom{|\rmE|}{i} \in \bigO(|\rmE|^{\paramRW}) \subseteq \bigO( (\paramAP^2\paramRW^2)^{\paramRW})$-many subgraphs to check for feasibility.
Each check requires $\bigO(\paramRW \cdot \SPSP)$ time.
The total running time follows.

Next, we prove that we always have a yes-instance if $|N|>\paramAP^2\paramRW^2$. We do so by showing that our algorithm always finds a witness. 
For $\paramAP=1$, we already argued this above.
For $\paramAP \geq 2$, the correctness hinges on the claim that we can always find large enough removal candidate sets $R_i$ with $|R_i|=(\paramRW-i)\paramAP+1=\paramAP\paramRW+1-i\paramAP$. 
Observe that the sets $R_i$ are pairwise disjoint.
Each $L_{i}$ consists of all nontrivial edges $e$ for which, for some $j\leq i$, either $e\in R_j$ or there exists an edge $\{u,v\}$ with $e\in A^N_{uv}$. The sets $L_i$ form an increasing (nested) sequence.
Crucially, an edge $e\in R_i$, for any $i$, cannot already be in $L_{i-1}$. Hence $|L_{i}|\geq|L_{i-1}|+|R_i|$. 
By this strict monotonicity, we only need to argue that even in the last iteration, there are still enough edges in $N\setminus L_{\paramRW-1}$ to select the singleton $R_{\paramRW}$, i.e., we show that $L_{\paramRW-1}\leq \paramAP^2\paramRW^2<|N|$, even if every candidate edge of every preceding set $R_j$ ($1\leq j<\paramRW$) contributes the maximum of $\paramAP+1$ distinct edges to $L_{\paramRW-1}$. 
Observe that $\sum_{j=1}^{\paramRW-1} |R_j| = \sum_{j=1}^{\paramRW-1} (\paramAP\paramRW+1-j\paramAP)
  = (\paramRW-1)(\paramAP\paramRW+1)-\paramAP\sum_{j=1}^{\paramRW-1}j = (\paramAP\paramRW^2-\paramAP\paramRW+\paramRW-1)-\frac{\paramAP}{2}(\paramRW^2-\paramRW)
  = \frac12\paramAP\paramRW^2-\frac12\paramAP\paramRW+\paramRW-1$.
With this and the fact $\paramAP\geq2$, we indeed have
\begin{align*}
|L_{\paramRW-1}| &\leq \bigg(\frac12\paramAP\paramRW^2-\frac12\paramAP\paramRW+\paramRW-1\bigg)(\paramAP+1)\\
&= \frac12\paramAP^2\paramRW^2\underbrace{-\frac12\paramAP^2\paramRW+\frac12\paramAP\paramRW}_{\leq 0}\ \underbrace{\vphantom{\frac12}-\paramAP}_{\leq0} \ \underbrace{+ \frac12\paramAP\paramRW^2}_{\leq \frac14\paramAP^2\paramRW^2}\underbrace{+\vphantom{\frac12}\paramRW}_{\leq \frac14\paramAP^2\paramRW^2}\underbrace{-\vphantom{\frac12}1}_{\leq0} \leq \paramAP^2\paramRW^2.%
\end{align*}

Next, we discuss the time complexity of the witness algorithm.
First, it computes the tightest alternative path $A_{uv}$ for each of the at most $m$-many $\{u,v\}\in C$. This is a special case of the \NP-hard constrained-shortest-path (CSP) problem~\cite{Garey1979}:
Given the graph $G'=(V,E'\coloneqq E \setminus \{ \{u,v\} \})$ with, for each $e \in E'$, lengths $\ell(e)$ and binary-costs $c(e)=1$ if $e \in N$ and $c(e)=0$ otherwise.
The goal is to determine a minimum-cost $uv$-path of length at most $\alpha \dist{G}{u}{v}$.
For any $\varepsilon>0$, the fully-polynomial-time approximation scheme for CSP in~\cite{lorenz2001simple} finds a sufficiently short path of cost at most $(1+\varepsilon)$ the optimum in $\mathcal{O}(mn(\log \log n + 1/ \varepsilon))$ time.
Since, for any alternative $uv$-path $A$ we have $1 \leq c(A) \leq n-1$, setting $\varepsilon= \frac{1}{n+1}$ yields an optimum solution in $\bigO(mn^2)$ time.

For $\paramAP=1$, we just construct $\auxG$ and find the independent set, which only requires linear time.
For $\paramAP\geq 2$, we perform $\paramRW$-many iterations to establish $R_i$ and $L_i$, for $1\leq i\leq\paramRW$. Using a temporary list for the elements of $N$ from which the edges are removed when adding them to $R_i$ or $L_i$, allows us to identify $R_i$ in time  $\bigO(|R_i|)$ and the \emph{new} locked elements in $L_i$ in $\bigO(\paramAP|R_i|)$ time.
Thus, over all iterations, we require at most $\bigO(|L_i|)\subseteq\bigO (\paramAP^2\paramRW^2)$ time for the first loop. Picking an item at any (of the $\paramRW$-many) iterations of the second loop requires (only) $\bigO(\paramAP)$ time: pick any element and remove the edges of $A^N_{uv}$ from the candidate sets with smaller index. Thus, besides the computation of the tightest alternative paths, the running time for the case $N\leq\paramAP^2\paramRW^2$ dominates.
\qed \end{proof}

\begin{algorithm}[tb]
\DontPrintSemicolon
    \caption{Find witness if $|N|\,{>}\,\paramAP^2 \paramRW^2$ for instance $(G{=}(V,E),w,\ell,\alpha,W)$, \rlap{parameterized by $(\paramRW,\paramAP)$}\label{alg:findWitness}}

        $A_{uv}$ $\gets$ tightest alternative path for each $\{u,v\} \in C$\\
        \If{\paramAP=1}{
            construct conflict graph $\auxG\coloneqq (N, \bigcup_{\{u,v\}\in C, e\in A^N_{uv}} \{ \{u,v\}, e \})$\\
            find independent set $\noF$ in $\auxG$ using \Cref{lem:IS}\\
            \Return $(V,E\setminus \noF)$ \Comment*[r]{feasible spanner of suitable weight}
            }
        $L_0 \gets \emptyset$, $\noF \gets \emptyset$\\
        \For{$i = 1, \dots, \paramRW$}{
            $R_i$ $\gets$ any subset of $N \setminus L_{i-1}$ with $|R_i|=\paramAP\paramRW$\\
            $L_i \gets L_{i-1} \cup \{A^N_{uv} \cup \{\{u,v\}\} \mid \{u,v\} \in R_i \cap C\}$
        }
        \For{$i = \paramRW, \dots, 1$}{
            $\{u_i,v_i\}$ $\gets$ any edge in $R_i \setminus \bigcup_{j>i} A^N_{u_jv_j}$\\
            add $\{u_i,v_i\}$ to $\noF$
        }
        \Return $(V,E\setminus\noF)$ \Comment*[r]{feasible spanner of suitable weight}
    
\end{algorithm}

If given some upper bound $\lmax$ on the edge lengths, we have $\paramAP\leq \alpha \lmax$. 
Then we can compute all $A_{uv}$ via a simple APSP, 
where a $uv$-path must not use $\{u,v\}$ itself, as any alternative path, even if non-tightest, suffices. 
We can compute this in $\bigO(\APtwoSP) \subseteq \decideTimeSymb$ time. %
On coupled instances, we can w.l.o.g.\ further assume that $w(e)=\ell(e) < \paramRW$ for all $e \in N$, as otherwise removing a single edge violating this property would yield a spanner with sufficient weight removed.

\begin{corollary}\label{cor:exCoupled}
    The above algorithm is an FPT algorithm w.r.t.\ $(\paramRW,\lmax,\alpha)$ (setting $\paramAP=\alpha \lmax$). For coupled instances, it is an FPT algorithm
    w.r.t.\ $(\paramRW,\alpha)$ (setting $\paramAP=\alpha \paramRW$).
    The running time for identifying a witness improves %
    to $\bigO(\decideTimeSymb)$. 
\end{corollary}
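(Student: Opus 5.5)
The plan is to reduce the corollary to \Cref{thm:exclude} by observing that its proof never uses that $A_{uv}$ is a \emph{tightest} alternative path. It only uses three facts: each $A_{uv}$ is a settling alternative path for $\{u,v\}\in C$; $|A^N_{uv}|\leq\paramAP$ for the value of $\paramAP$ plugged into the bounds; and $|A^N_{uv}|\geq 1$. So I would rerun the argument of \Cref{thm:exclude} with an arbitrary settling alternative path $A'_{uv}$ in place of $A_{uv}$ and with a substitute upper bound $\paramAP'$ in place of $\paramAP$.

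\textbf{Step 1: bounded lengths.} Let $\lmax$ bound all edge lengths. Any settling alternative $uv$-path $P$ satisfies $\ell(P)\leq\alpha\,\dist{G}{u}{v}\leq\alpha\,\ell(\{u,v\})\leq\alpha\lmax$. Every edge has length at least $1$, so $P$ has at most $\alpha\lmax$ edges, and hence $|P\cap N|\leq\alpha\lmax$. Setting $\paramAP'\coloneqq\alpha\lmax$, every settling alternative path obeys the bound. Both cases of the witness construction go through unchanged with $\paramAP'$:
\begin{itemize}
\item if $\paramAP'=1$, the conflict graph still satisfies the hypothesis of \Cref{lem:IS};
\item if $\paramAP'\geq 2$, the counting argument for the sets $R_i$ and $L_i$ only needs $|A'^N_{uv}|\leq\paramAP'$.
\end{itemize}
The enumeration branch for $|N|\leq\paramAP'^2\paramRW^2$ is also unaffected, so we get an FPT algorithm in $(\paramRW,\lmax,\alpha)$.

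\textbf{Step 2: coupled instances.} The preprocessing claim is the following. If some $e\in N$ has $w(e)=\ell(e)\geq\paramRW$, then removing $e$ alone removes enough weight. We also need $E\setminus\{e\}$ to still be a spanner. Since $e$ is nontrivial, it has a settling alternative path, and every other terminal pair keeps its shortest paths up to replacing $e$ by that path. This yields at most stretch $\alpha$ by the usual edge-based argument of~\cite{Peleg1989}, i.e.\ it suffices that $K=E$ is settled.

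So we may assume $\ell(e)<\paramRW$ on $N$. The remaining difficulty is that trivial edges may still be long. Here I would argue directly with weights. In a coupled instance, a settling alternative path $P$ has nontrivial part of total length at most $\ell(P)\leq\alpha\,\ell(\{u,v\})<\alpha\paramRW$, because $\{u,v\}\in C\subseteq N$. Each nontrivial edge has length at least $1$, so $|P\cap N|\leq\alpha\paramRW$, giving $\paramAP'=\alpha\paramRW$. This Step 2 bound, rather than Step 1, is the main point to get right: it must bound only the nontrivial edges, not all edges.

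\textbf{Step 3: running time.} The only super-$\decideTimeSymb$ cost in \Cref{thm:exclude} was computing the tightest paths via the CSP approximation scheme in $\bigO(mn^2)$ per pair. Now any settling alternative path suffices, and the shortest alternative path is settling whenever any settling alternative path exists, which it does for nontrivial edges. So I would take, for each $\{u,v\}\in C$, the second shortest $uv$-path, i.e.\ the shortest path avoiding the edge. These are already produced by the AP$2$SP computation used to identify trivial edges, in $\APtwoSP\subseteq\decideTimeSymb$ time. The remaining witness steps cost $\bigO(\paramAP'^2\paramRW^2)$, which is dominated by $\decideTimeSymb$, so the witness is found in $\bigO(\decideTimeSymb)$ time.
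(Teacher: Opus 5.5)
Your proposal is correct and follows the paper's own argument. Any settling alternative path for a critical pair has at most $\alpha\lmax$ nontrivial edges, or at most $\alpha\paramRW$ in the coupled case once the easy YES-case of a nontrivial edge with $w(e)\geq\paramRW$ is discarded. Since the witness construction only needs \emph{some} settling alternative path obeying this bound, the tightest paths can be replaced by shortest alternative paths from the $\APtwoSP$ computation, which removes the $n^2m^2$ term. Your write-up is more explicit than the paper's sketch: it notes that in the coupled case only $\ell(\{u,v\})<\paramRW$ for $\{u,v\}\in C\subseteq N$ is needed (trivial edges may be long), and it justifies why $E\setminus\{e\}$ remains a spanner. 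One small slip: the shortest alternative path is not literally the ``second shortest'' $uv$-path when the edge is nonmetric or tied with another shortest path, so you should take whichever of the two shortest $uv$-paths avoids $\{u,v\}$.
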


For unit lengths, not only do we have $\lmax=1$, but also all shortest path computations simplify to linear-time BFS-traversals as discussed in \Cref{section:pre}.

\begin{corollary}\label{cor:unitlengthex}
    For unit-length instances, the above algorithm is FPT w.r.t.\ $(\paramRW,\alpha)$ and decides the problem (incl.\ a witness) in $\bigO\big( \paramRW^{2\paramRW+1} \alpha^{2\paramRW}\cdot m +nm \big)$ time.
\end{corollary}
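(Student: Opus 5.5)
The plan is to instantiate \Cref{thm:exclude} and its \Cref{cor:exCoupled} with $\lmax=1$, so that $\paramAP\leq\alpha$, and then recheck each term of the running time using the unit-length simplifications from \Cref{section:pre}. First I will establish $\paramAP\leq\alpha$ directly. For a critical (hence metric) pair $\{u,v\}\in C$ we have $\dist{G}{u}{v}=\ell(\{u,v\})=1$. Any settling alternative path therefore has length, which equals its number of edges, at most $\alpha$. In particular it has at most $\alpha$ nontrivial edges. So the tightest alternative path satisfies $|A^N_{uv}|\leq\alpha$, and we may replace $\paramAP$ by $\alpha$ in the bound $|N|\leq\paramAP^2\paramRW^2$ and throughout the algorithm.

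Next comes the accounting of the running time. Identifying trivial edges needs $\APtwoSP$. For unit lengths this is $\bigO(nm)$, by the $\bigO(1)$-priority-queue variant of Eppstein's algorithm with $k=2$ mentioned in \Cref{section:pre}; weight-$0$ edges are found in $\bigO(m)$. In the enumeration case $|N|\leq\alpha^2\paramRW^2$ there are $\bigO((\alpha^2\paramRW^2)^{\paramRW})$ candidate edge sets to test. Each test must verify the stretch property only for the at most $\paramRW$ removed edges. Each such verification is a single-pair query which, by BFS, takes $\SPSP\in\bigO(m)$, or better, a depth-bounded BFS up to depth $\alpha$. Hence each check costs $\bigO(\paramRW\cdot m)$, and this phase totals $\bigO(\paramRW^{2\paramRW+1}\alpha^{2\paramRW}\cdot m)$.

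For the witness, I will follow \Cref{cor:exCoupled}: we need not compute \emph{tightest} alternative paths, because any settling alternative path has at most $\alpha$ nontrivial edges, and the witness argument only used $|A^N_{uv}|\leq\paramAP$. Such paths can be read off from the second-shortest-path computation already done in $\APtwoSP\in\bigO(nm)$. The remaining steps, namely the conflict graph with \Cref{lem:IS} for $\paramAP=1$ and the $R_i/L_i$ construction otherwise, are linear or $\bigO(\alpha^2\paramRW^2)$ and are dominated.

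The main obstacle is ensuring the witness phase really avoids the $\bigO(mn^2)$ CSP-based computation used in \Cref{thm:exclude}. I would check that the $k=2$ unit-length routine returns, for each metric edge $\{u,v\}$, an explicit alternative $uv$-path of length at most $\alpha$. A fallback is a BFS in $G-\{u,v\}$ restricted to depth $\alpha$ per critical pair. If that is too slow, a single BFS per source node recording second distances yields the paths within the $\bigO(nm)$ bound.
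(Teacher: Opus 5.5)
Your proposal is correct and follows the paper's route: specialize \Cref{cor:exCoupled} to $\lmax=1$ (so $\paramAP\le\alpha$), then replace $\SPSP$ and $\APtwoSP$ by the BFS-based bounds $\bigO(m)$ and $\bigO(nm)$ from \Cref{section:pre}. For the witness step you flagged, the cleanest fix is to compute alternative paths lazily, only for the $\bigO(\alpha\paramRW^2)$ edges in $\bigcup_i R_i$, with one BFS in $G-\{u,v\}$ each (total $\bigO(\alpha\paramRW^2 m)$, within the claimed bound); this also avoids reading paths off the $k=2$ routine of Eppstein's algorithm, whose second-shortest output is a walk that need not be a simple alternative path.
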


\begin{corollary}\label{cor:exUnitWeight}
    For unit-weight instances, we can substitute the parameter $\paramRW$ in the above algorithm by the number of removed nontrivial edges $\paramRE$. For basic instances, using \Cref{cor:unitlengthex}, we yield a parameter dependency only on $(\paramRE,\alpha)$.
\end{corollary}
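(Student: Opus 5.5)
The plan is to reduce the unit-weight case to \Cref{thm:exclude} with only a change of parameter names, and then to reduce the basic case to \Cref{cor:unitlengthex}.

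\emph{Unit weights.} Every nontrivial edge has weight $w(e)=1$. Weight-$0$ edges are trivial by definition, so they never lie in $N$, and we never remove trivial edges. Hence for any removed set $\noF\subseteq N$ we have $w(\noF)=|\noF|$. The question ``is there a spanner removing total weight at least $\paramRW$'' is therefore literally the question ``is there a spanner removing at least $\paramRE$ nontrivial edges'' with $\paramRE=\paramRW$. I would walk through the proof of \Cref{thm:exclude} and check that $\paramRW$ is used in exactly three places, each of which only needs this identification:
\begin{itemize}
\item[(i)] the enumeration bound, which relies on ``at most $\paramRW$ edges removed'' because each nontrivial edge has weight at least $1$;
\item[(ii)] the threshold $|N|>\paramAP^2\paramRW^2$;
\item[(iii)] the witness constructions, which remove at least $\paramRW$ edges, each of weight exactly $1$ here.
\end{itemize}
Nothing in the argument uses weights beyond $w(e)\ge 1$ on $N$. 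Substituting gives the running time $\bigO(\paramRE^{2\paramRE+1}\paramAP^{2\paramRE}\cdot\SPSP+\APtwoSP)$.

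\emph{Basic instances.} These are simultaneously unit-weight and unit-length. I would invoke \Cref{cor:unitlengthex}, which is already parameterized by $(\paramRW,\alpha)$, and replace $\paramRW$ by $\paramRE$ as above. The one fact needed is $\paramAP\le\alpha$ under unit lengths, which is already how that corollary is justified ($\paramAP\le\alpha\lmax$ with $\lmax=1$). For completeness: a critical pair $\{u,v\}$ is a metric edge with $\dist{G}{u}{v}=1$, so any settling alternative path has length at most $\alpha$. It therefore has at most $\alpha$ edges and in particular at most $\alpha$ nontrivial edges. This yields the dependency on $(\paramRE,\alpha)$ only, with running time $\bigO(\paramRE^{2\paramRE+1}\alpha^{2\paramRE}m+nm)$.

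This is at most the bound $\bigO(\paramRE^{2\paramRE+3}\alpha^{2\paramRE+2}+nm)$ stated in \Cref{tab:results}, provided $m$ is absorbed appropriately. If $|N|$ is small we may restrict the feasibility checks to the kernel-sized part; otherwise the $nm$ term covers the cost. The only step needing care is this bookkeeping of the $\SPSP$ factor, which should be handled by noting that the checks run over the $\bigO(\paramRE^2\alpha^2)$ nontrivial edges. The substitution itself is straightforward.
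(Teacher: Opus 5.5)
Your proof is correct and follows the paper's reasoning. With unit weights every nontrivial edge has weight $1$, so $w(\noF)=|\noF|$ and $\paramRW=\paramRE$ throughout \Cref{thm:exclude}; for basic instances, $\paramAP\le\alpha$ lets you plug directly into \Cref{cor:unitlengthex}.

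Your final paragraph is unnecessary, and its comparison is wrong in general: the $\bigO(\paramRE^{2\paramRE+3}\alpha^{2\paramRE+2}+nm)$ bound belongs to \Cref{cor:exBasic}, and your $\bigO(\paramRE^{2\paramRE+1}\alpha^{2\paramRE}m+nm)$ bound is not contained in it, since the $m$ factor is not absorbed. Reaching that bound genuinely needs Kobayashi's kernelization, which drops the mandatory edges so that the feasibility checks run on only $\bigO(\paramRE^2\alpha^2)$ edges.
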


Incidentally, for unit-weight instances, we obtain an algorithm parameterized in the somewhat common FPT parameter (e.g.\ \cite{uhlmann2013two,balaban2025computing}) \hypertarget{paramFES}{\defn{feedback edge set number}}  $\paramFES=m-n+1$, i.e., the number of edges whose removal yields an acyclic subgraph.
\begin{corollary}\label{cor:exFES}
    For unit-weight instances, we yield an FPT algorithm w.r.t.\ $(\paramFES, \paramAP)$ by running the above $(\paramRE, \paramAP)$-algorithm for all $1 \leq \paramRE \leq \paramFES$.
\end{corollary}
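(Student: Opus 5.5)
The plan is to reduce the $(\paramFES,\paramAP)$-parameterization to the $(\paramRE,\paramAP)$-algorithm of \Cref{cor:exUnitWeight}, using that in unit-weight instances the number of removable nontrivial edges is bounded by $\paramFES$. Fix a unit-weight instance $(G,w=\oneFunc,\ell,\alpha,W)$ and a feasible spanner $H=(V,F)$. By the stretch property, $\dist{H}{u}{v}$ is finite for every pair, so $H$ must be connected. A connected spanning subgraph has at least $n-1$ edges, and therefore
\[
|\noF| = m-|F| \leq m-n+1 = \paramFES .
\]
In particular, the number $\paramRE$ of removed nontrivial edges satisfies $\paramRE\leq\paramFES$.

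Since all weights are $1$, the instance is a yes-instance iff some feasible spanner removes at least $m-W$ edges. Trivial edges are w.l.o.g.\ kept, so this amounts to removing at least $\paramRE^\ast := m-W$ nontrivial edges. If $\paramRE^\ast\leq 0$, we answer YES with $H=G$. If $\paramRE^\ast>\paramFES$, we answer NO by the bound above. Otherwise $1\leq\paramRE^\ast\leq\paramFES$, and we run the $(\paramRE,\paramAP)$-algorithm of \Cref{cor:exUnitWeight} with $\paramRE=\paramRE^\ast$. Alternatively, as the statement suggests, we run it for every $1\leq\paramRE\leq\paramFES$ and return the largest $\paramRE$ yielding YES together with its witness. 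This gives the optimum, and hence decides the instance for any $W$.

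For the running time, each call costs $\bigO(\paramRE^{2\paramRE+1}\paramAP^{2\paramRE}\cdot\SPSP)$ plus the one-time preprocessing $\APtwoSP$. Trivial-edge detection and the tightest paths $A_{uv}$ are independent of $\paramRE$, so they can be shared across calls. Summing over $\paramRE\leq\paramFES$ gives at most
\[
\paramFES\cdot\paramFES^{2\paramFES+1}\paramAP^{2\paramFES}\cdot\SPSP = \paramFES^{2\paramFES+2}\paramAP^{2\paramFES}\cdot\SPSP ,
\]
plus $\APtwoSP$. This is a function of $(\paramFES,\paramAP)$ times a polynomial, so the algorithm is FPT.

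The only step needing care is the connectivity argument that justifies $\paramRE\leq\paramFES$. This relies on $\dist{G}{u}{v}$ being finite ($G$ is connected) and on the stretch property forcing finiteness in $H$, which holds. The rest is bookkeeping.
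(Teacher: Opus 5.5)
Your proof is correct and follows the paper's implicit reasoning. Every spanner is connected, so at most $m-n+1=\paramFES$ edges (in particular at most $\paramFES$ nontrivial ones) can be removed. Running the $(\paramRE,\paramAP)$-algorithm for all $\paramRE\leq\paramFES$ then sums to the $\bigO(\paramFES^{2\paramFES+2}\paramAP^{2\paramFES}\cdot\SPSP+\APtwoSP)$ bound listed in \Cref{tab:results}. You also note that a single call with $\paramRE=m-W$ suffices once the cases $m-W\leq 0$ and $m-W>\paramFES$ are handled directly; this is a valid minor simplification.
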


In case of basic instances, Kobayashi's algorithm~\cite{Kobayashi2018} becomes feasible. It has the benefit of a pure kernelization, namely that its running time $\bigO(\paramRE^{2\paramRE+3}\alpha^{\paramRE+1}(\alpha+1)^{\paramRE+1} + nm)$ for the decision problem is a \emph{sum} of a parameter-dependent superpolynomial and an input polynomial.
Yet, our algorithm achieves a stronger bound for the former.
Hence, we can simply use the kernelization step of~\cite{Kobayashi2018} (see \Cref{section:pre}) and then use our stronger case-distinction on $|N|$. %
Checking a solution in the kernel requires at most $\bigO(\paramRE^3\alpha^2)$ time instead of our previous $\bigO(\paramRE m)$ time---at most $\paramRE$ BFSs in a graph with at most $\paramRE^2\alpha^2$ edges:

\begin{corollary}\label{cor:exBasic}
    For basic instances, 
    there is a kernelization-based FPT algorithm w.r.t.\ $(\paramRE,\alpha)$ that decides the problem (incl.\ a witness) in $\bigO\big( \paramRE^{2\paramRE+3} \alpha^{2\paramRE+2}+nm \big)$~time.
\end{corollary}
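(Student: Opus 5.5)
Three steps: apply Kobayashi's kernelization, run the case distinction on $|N|$ from \Cref{thm:exclude} on the resulting small instance, and account for the costs so that the parameter-dependent part never multiplies an input polynomial. Throughout I use $\paramAP\leq\alpha$ for unit-length instances, and $\paramRW=\paramRE$ for unit weights (\Cref{cor:exUnitWeight}).

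\emph{Preprocessing.} I first apply the kernelization step of~\cite{Kobayashi2018}. In a basic instance the trivial edges are exactly the mandatory ones. Via BFS-based AP2SP (see \Cref{section:pre}) they are identified in $\bigO(nm)$ time, and at the same time we obtain, for every critical pair, an alternative path of length at most $\alpha$, hence with at most $\alpha$ nontrivial edges. Deleting $T$ gives the equivalent instance on $G'=(V,N)$, and by~\cite{Kobayashi2018} spanners $H'$ of $G'$ correspond to spanners $H'+T$ of $G$ with the same number of removed edges. This is the only step that touches the full input, and it contributes the additive $nm$.

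\emph{Case $|N|>\alpha^2\paramRE^2$.} The answer is YES. A witness comes from \Cref{alg:findWitness} with $\paramAP$ replaced by $\alpha$. Since that algorithm only uses the bound $|A^N_{uv}|\leq\paramAP$, the counting argument in the proof of \Cref{thm:exclude} goes through verbatim with $\alpha$ in place of $\paramAP$. Apart from the $\bigO(nm)$ already spent on computing the alternative paths, it runs in $\bigO(\alpha^2\paramRE^2)$ time.

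\emph{Case $|N|\leq\alpha^2\paramRE^2$.} We have a kernel. I enumerate all $\bigO(|N|^{\paramRE})\subseteq\bigO(\alpha^{2\paramRE}\paramRE^{2\paramRE})$ candidate sets of at most $\paramRE$ removed nontrivial edges. For each candidate I check feasibility only for the removed edges: each removed edge $\{u,v\}$ needs a $uv$-path of length at most $\alpha$ in the remaining graph. One might expect this check to be the bottleneck, since it seems to need a BFS in all of $G$. It does not, because of a locality argument. A short settling path can use trivial edges, but trivial edges are kept anyway, so I can precompress the relevant local structure. Concretely, a bounded-depth BFS from the endpoints of each removed edge, restricted to the kernel graph $(V(N),N)$ augmented by the relevant trivial edges, is what the kernelization of~\cite{Kobayashi2018} permits. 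Each such BFS runs in a graph with $\bigO(\paramRE^2\alpha^2)$ edges, so one check costs $\bigO(\paramRE\cdot\paramRE^2\alpha^2)=\bigO(\paramRE^3\alpha^2)$. Multiplying by the number of candidates gives $\bigO(\paramRE^{2\paramRE+3}\alpha^{2\paramRE+2})$; adding the preprocessing gives the claimed bound.

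The main obstacle is justifying that checking feasibility inside the kernel is correct, i.e.\ that a removed edge is resettled by some path of the kernel instance exactly when it is resettled in $G$. I would rely directly on Kobayashi's equivalence statement rather than re-proving it.
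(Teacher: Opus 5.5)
Your overall plan is the paper's: apply Kobayashi's kernelization, split at $|N|\leq\alpha^2\paramRE^2$, answer YES above the threshold with the witness from \Cref{alg:findWitness}, and otherwise enumerate $\bigO(|N|^{\paramRE})$ candidate sets $\noF$, checking each by $\paramRE$ BFSs in a graph with at most $\paramRE^2\alpha^2$ edges. The weak point is the paragraph that is supposed to justify the $\bigO(\paramRE^3\alpha^2)$ check. You assert that a short settling path can use trivial edges, and you propose a BFS in $(V(N),N)$ ``augmented by the relevant trivial edges''. You never bound how many such edges there are, so the claimed $\bigO(\paramRE^2\alpha^2)$ size of the BFS graph does not follow. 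In the worst case the check falls back to $\bigO(\paramRE m)$, which destroys the additive form of the running time. Your premise also contradicts the equivalence you cite. If a removed nontrivial edge could need a trivial edge to be resettled, then $H'$ being a spanner of $G'=(V,N)$ would not be equivalent to $H'+T$ being a spanner of $G$.

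The missing idea is a short cycle argument that makes the kernel self-contained. In a basic instance, let $P$ be a settling alternative $uv$-path for a nontrivial edge $\{u,v\}$. Then $\ell(P)\leq\alpha$, so $P+\{u,v\}$ is a simple cycle of length at most $\alpha+1$. For any edge $\{x,y\}$ of $P$, the rest of this cycle is an alternative $xy$-path of length at most $\alpha=\alpha\cdot\dist{G}{x}{y}$, so $\{x,y\}$ is not mandatory. In basic instances trivial means mandatory, hence $\{x,y\}\in N$. Consequently every settling path of a removed edge lies inside $N\setminus\noF$, and the set of ``relevant trivial edges'' is empty. Checking a candidate is therefore just $|\noF|\leq\paramRE$ BFSs in $(V(N),N\setminus\noF)$, a graph with at most $\alpha^2\paramRE^2$ edges and at most $2\alpha^2\paramRE^2$ nodes. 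Replace your ``precompression'' paragraph with this argument and your accounting yields $\bigO(\paramRE^{2\paramRE+3}\alpha^{2\paramRE+2}+nm)$. One small point for the witness branch: the counting argument of \Cref{thm:exclude} uses $\paramAP\geq 2$. With $\alpha$ in its place this holds automatically, since $N\neq\emptyset$ forces an alternative path of length at least $2$, and hence $\alpha\geq 2$.
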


\section{Inclusion Algorithm}\label{section:alg_inclusion}

Now, we consider the natural parameterization $W\coloneqq w(H)$ and its derivatives. Thereby, we leverage the \hypertarget{paramNTEiS}{number} $\paramNTEiS \coloneqq |N \cap F|\leq |F|$ of \defn{nontrivial edges in the spanner}. 
Clearly, by definition, $\paramNTEiS\leq W$.
An FPT algorithm w.r.t.\ $\paramNTEiS$ thus induces one w.r.t.\ $W$ (cf.\ \Cref{cor:W} below).
First, we show that parameterizing only with $W$ (or by $\paramNTEiS$) will not yield an FPT algorithm, even for a constant stretch~$\alpha$.

\begin{theorem}\label{thm:W2hard}
    The \SPANNER problem parameterized in $W$ is \Wtwo-hard, even for unit-length instances on bipartite graphs with constant $\alpha$ and three distinct~weights.
\end{theorem}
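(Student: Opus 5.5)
We give a parameterized reduction from \textsc{Set Cover} parameterized by the solution size $k$, which is \Wtwo-hard. Let $(\universe,\family,k)$ be an instance with universe $\universe=\{x_1,\dots,x_p\}$ and family $\family=\{S_1,\dots,S_q\}$. We build a bipartite, unit-length \SPANNER instance with constant $\alpha$ (we aim for $\alpha=3$) and weights in $\{0,1,W+1\}$, where $W=k$. Weight-$0$ edges are trivial and always included. Weight-$(W+1)$ edges can never be used by a solution, so they only shape distances in $G$. Weight-$1$ edges are the set-selection edges, so a spanner of weight at most $k$ uses at most $k$ of them.

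\textbf{Construction.} Take a root $r$. For each set $S_j$ add a node $s_j$ and a weight-$1$ edge $\{r,s_j\}$. For each element $x_i$ add a node $e_i$. For every $x_i\in S_j$ add a weight-$0$ edge $\{s_j,e_i\}$. Finally add, for each $i$, a ``demand'' edge $\{r,e_i\}$ of weight $W+1$.

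\textbf{Bipartiteness.} As stated, the triangle $r,s_j,e_i$ breaks bipartiteness. We therefore subdivide: $\{r,e_i\}$ is replaced by a path $r - a_i - e_i$, where the edge $\{r,a_i\}$ has weight $0$ and the edge $\{a_i,e_i\}$ has weight $W+1$. The graph is then bipartite with sides $\{r\}\cup\{e_i\}$ and $\{s_j\}\cup\{a_i\}$. The critical pair is $\{a_i,e_i\}$, with $\dist{G}{a_i}{e_i}=1$. Any alternative path must run $a_i\to r\to s_j\to e_i$, which has length $3$ and requires $x_i\in S_j$ together with the edge $\{r,s_j\}$. We set $\alpha=3$. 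Alternative routes through other $a_{i'}$ or through two set nodes have length at least $5$ by parity and structure, so they are not settling.

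\textbf{Checking the remaining pairs.} The edges $\{r,s_j\}$ are metric and, when removed, need a settling path of length at most $3$. A path $r\to a_{i'}\to e_{i'}\to s_j$ would use a heavy edge, and $r\to s_{j'}\to e_i\to s_j$ would use another weight-$1$ edge. This is the delicate part. I would fix it by attaching a private detour $r - b_j - c_j - s_j$ of weight-$0$ edges for each $j$. The detour has length $3$, so every $\{r,s_j\}$ is settled for free. It keeps the graph bipartite ($b_j$ on the $s$-side, $c_j$ on the $r$-side) and does not create any length-$\le 3$ path between $a_i$ and $e_i$: the path $a_i,r,b_j,c_j,s_j,e_i$ has length $5$. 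The weight-$0$ edges are settled trivially by inclusion.

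\textbf{Equivalence.} A spanner of weight at most $k$ contains no heavy edge. Hence every $\{a_i,e_i\}$ must be settled via some $s_j\ni x_i$ with $\{r,s_j\}$ included, and the at most $k$ included weight-$1$ edges form a set cover. Conversely, a set cover of size $k$ yields a spanner: take all weight-$0$ edges plus the chosen $\{r,s_j\}$.

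The reduction is polynomial with $W=k$, and it uses unit lengths, constant $\alpha=3$, a bipartite graph, and three weights. The main obstacle is the exhaustive verification that no unintended short alternative path exists and that all non-critical metric pairs are settled by weight-$0$ structure. I would organise that check by bipartite parity together with case analysis on which of the gadget nodes $r$, $b_j$, $c_j$ the path uses.
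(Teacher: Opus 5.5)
Your reduction is correct and is essentially the paper's. The paper reduces from \textsc{HittingSet} (the dual of \textsc{Set Cover}) with the same ingredients: a weight-$(W+1)$ critical edge per object to be covered, a weight-$1$ selection edge per choice (relieved of its own stretch obligation by a weight-$0$ detour of length~$3$), and weight-$0$ connectors, so that for $\alpha=3$ each critical edge is settled exactly by a length-$3$ path through one chosen weight-$1$ edge. The differences are cosmetic: you swap the roles of sets and elements, and you let all selection edges share the hub $r$ with a private node $a_i$ per element, whereas the paper gives each element its own pair $r_1,r_2$ joined to the endpoints $u_i,v_i$ of every set containing it.
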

\begin{proof}
 We reduce from the \NP-complete \textsc{HittingSet} problem~\cite{Karp1972}: Given a universe~$\universe$, a family of sets $\family=\{S_1,S_2,...\}$ ($S_i\subseteq\universe$ for all $S_i \in \family$), and a number $k \in \mathbb{N}$, we ask whether there is a subset $X \subseteq \universe$ with $|X| \leq k$ such that $X \cap S_i \neq \emptyset$ for all $S_i \in \family$. The problem is \Wtwo-complete w.r.t.~$k$~\cite{cygan2015parameterized}.
    From a given \textsc{HittingSet} instance $I^*$, we construct a \SPANNER instance $I=(G=(V,E),w,\ell=\oneFunc,\alpha=3,W=k)$ and parameterize by $W=k$ such that $I^*$ is a yes-instance if and only if $I$ is a yes-instance (see \Cref{fig:HShard} for an example):
    For every $S_i\in \family$, we construct two nodes $u_i, v_i \in V$ connected by an edge $\{u_i, v_i\} \in E$ with weight $W+1$. Denote these edges $E_{\family}$. 
    For every $r \in \universe$, we construct two nodes $r_1, r_2 \in V$ connected by an edge $\{r_1, r_2\} \in E$ with weight~$1$. Denote these edges $E_{\universe}$.
    We also connect these node pairs via path of length three, i.e., 
    we add nodes $r',r'' \in V$ and edges $\{r_1,r'\}$, $\{r',r''\}$, and $\{r',r_2\}$ with weight~$0$.
    Lastly, for each $S_i \in \family$ and each $r \in S_i$, we add edges $\{u_i, r_1\},\{r_2, v_i\} \in E$ of weight~$0$.
    Clearly, the constructed graph $G$ is bipartite.
    All weight-$0$ edges are trivial. Thus all terminal pairs in $E_{\universe} \subseteq K$ are settled as well.
    The critical terminal pairs are thus $C=E_{\family}=\{\{u_i,v_i\}:1\leq i\leq |\family|\}$.
    
    Assume that $I^*$ is a yes-instance with witness $X \subseteq \universe$.
    Our spanner $H=(V, F)$ is constructed as follows: add all weight-$0$ edges. 
    Then for every $r \in X$, add the edge $\{r_1, r_2\}$ to the spanner.
    Since $X$ is a feasible solution, it contains an element $r^{(i)}\in S_i$ for each $S_i$.
    Thus, for every $\{u_i, v_i\} \in E_{\family}=C$, $H$ contains a settling path $\{ \{u_i,r^{(i)}_1\}, \{r^{(i)}_1, r^{(i)}_2\}, \{r^{(i)}_2,v_i\} \}$ of length $3$. So, for stretch $\alpha=3$, $H$ is a feasible spanner with weight $w(H)\leq k$ and $\mu\leq k$.

    Conversely, assume that $I$ is a yes-instance with witness~$H$. Consider any $\{u_i, v_i\} \in E_{\family}=C$.
    Since $w(H)\leq W$, $H$ cannot contain edge $\{u_i,v_i\}$ and the paths induced by only trivial edges are too long (length 5). Thus, for every $\{u_i,v_i\}$, there has to exist some $r^{(i)}\in S_i$ such that $H$ contains the edge $\{r^{(i)}_1, r^{(i)}_2\}$ of weight~$1$, for a total of at most $W$ edges. Then, $X=\{r^{(i)}:1\leq i\leq |\family|\}$ is a feasible solution for $I$ with $|X|\leq W=k$.
\qed \end{proof}

\begin{figure}[tb]
    \centering
    \resizebox{9cm}{!}{\includeTikzFigure{HSproof}}
    \caption{Example for \Cref{thm:W2hard}. Black, blue, and red edges have weight $0$, $1$, and $W+1$, respectively.
    Thereby, $I^*=(\universe=\{a,b,c,d,e\}, \family=\{\{a,b\}, \{b,c,e\}, \{c,d\}, \{d,e\}\} )$.}
    \label{fig:HShard}
\end{figure}

Hence, achieving FPT time requires considering additional parameters.
The \hypertarget{paramLNS}{\defn{local neighborhood size}} $\paramLNS \coloneqq \max_{\{u,v\} \in C} |\bigcup_{P \in \pathSet{uv}} V(P)|$ is the maximum number of nodes visited by settling paths for any critical terminal pair.
It is commonly used to improve the analysis of spanner approximation guarantees, see, e.g.,~\cite{wir2025simple,dinitz2011directed,berman2011improved,dinitz2025hopsets}.
We use two parameters that more precisely represent their algorithmic necessity: 
The \hypertarget{paramBS}{\defn{\bundlesize}} $\paramBS\coloneqq\max_{\{u,v\}\in C} |\bigcup_{P\in\pathSet{uv}} P\cap N|$ is the maximum number of \emph{nontrivial edges} in settling paths for any critical terminal pair. Clearly, while $\paramBS\leq\binom{\paramLNS}2$, $\paramLNS$ can be unbounded for constant $\paramBS$, as the latter does not count trivial edges; thus $\paramBS$ is a stronger parameter than $\paramLNS$.
Alternatively, the \hypertarget{paramBW}{\defn{\bundlewidth}} $\paramBW\coloneqq\max_{\{u,v\}\in C} |\pathSet{uv}|$ is the maximum number of settling paths for any critical terminal~pair.
\Cref{obs:unitlength} follows from the \NP-hardness of the basic \SPANNER problem for constant $\alpha$ and constant \hypertarget{maxDeg}{maximum degree}~$\maxDeg$~\cite{cai1994spanners,Gomez2023}.

\begin{observation}\label{obs:unitlength}
    The \SPANNER problem is \paraNP-hard if parameterized by the \bundlewidth $\paramBW$ and/or the \bundlesize $\paramBS$ and/or the local neighborhood size~$\paramLNS$; this holds even for basic instances and constant $\alpha$. [Proof in Appendix~\ref{apx:MissingProofs}]
\end{observation}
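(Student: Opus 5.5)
We reduce from the \NP-hard basic \SPANNER problem restricted to constant stretch $\alpha$ and constant maximum degree $\maxDeg$~\cite{cai1994spanners,Gomez2023}. The plan is to show that on such instances each of $\paramBW$, $\paramBS$ and $\paramLNS$ is bounded by a constant depending only on $\alpha$ and $\maxDeg$. Then the problem is \NP-hard even when all three parameters are fixed to constants, which is exactly \paraNP-hardness for each parameter, alone or combined.

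\textbf{Bounding the local neighborhood size.} Fix a critical terminal pair $\{u,v\}\in C$. We only enforce the stretch for metric edges, and in a basic instance we have $\dist{G}{u}{v}=1$. Hence every settling $uv$-path $P\in\pathSet{uv}$ has length $\ell(P)\leq\alpha\cdot 1=\alpha$. Every node on such a path therefore lies within hop distance $\alpha$ of $u$, and in fact within $\lceil\alpha/2\rceil$ of $u$ or of $v$. A graph of maximum degree $\maxDeg$ has at most $1+\maxDeg\sum_{j=0}^{\alpha-1}(\maxDeg-1)^j\leq 1+\maxDeg^{\alpha+1}$ nodes within distance $\alpha$ of $u$. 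This gives $\paramLNS\leq 1+\maxDeg^{\alpha+1}$, a constant.

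\textbf{Bounding the other two parameters.} For the \bundlesize, every edge counted in $\paramBS$ joins two nodes counted in $\paramLNS$, so $\paramBS\leq\binom{\paramLNS}{2}$, again a constant. For the \bundlewidth, a settling path is a simple path of at most $\alpha$ edges starting at $u$. It is determined by its sequence of at most $\alpha$ successive edge choices, each with at most $\maxDeg$ options. Hence $\paramBW\leq\sum_{j=1}^{\alpha}\maxDeg^{j}\leq\alpha\maxDeg^{\alpha}$, also a constant.

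\textbf{Conclusion and main obstacle.} An FPT (indeed even XP) algorithm for any of these parameters, or any combination of them, would therefore solve an \NP-hard problem in polynomial time, unless $\mathbf{P}=\NP$. The main point to verify is that the cited hardness indeed holds for simultaneously constant $\alpha$ and constant $\maxDeg$ on basic instances; I would rely on~\cite{cai1994spanners,Gomez2023} for this. The parameter bounds themselves are routine. Including trivial edges in these counts causes no problem, since the bounds count all nodes and edges, not only the nontrivial ones. Likewise, restricting the maxima to the critical pairs $C$ only lowers the values.
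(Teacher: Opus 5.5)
Your proposal is correct and takes essentially the same route as the paper. Both invoke the \NP-hardness of basic \SPANNER for constant $\alpha$ and constant maximum degree $\Delta$, and observe that every settling path of a critical pair has at most $\alpha$ edges, so $\paramBW$, $\paramBS$ and $\paramLNS$ are all bounded by a function of $\alpha$ and $\Delta$. Your explicit bounds ($\paramLNS\le 1+\Delta^{\alpha+1}$, $\paramBS\le\binom{\paramLNS}{2}$, $\paramBW\le\alpha\Delta^{\alpha}$) are looser than the paper's uniform $\Delta^{\alpha}$ and its refined $\paramBW\le(\Delta-1)^{\lceil\alpha/2\rceil}+1$, but any constant bound suffices for \paraNP-hardness.
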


Thus, none of the above parameters $W$, $\paramNTEiS$, $\paramBW$, $\paramBS$, and $\paramLNS$ yield an FPT algorithm on their own. We show, however, that the combined parameters $(\paramNTEiS,\paramBW)$ and $(\paramNTEiS,\paramBS)$---and consequently $(\paramNTEiS,\paramLNS)$, $(W,\paramBW)$, $(W,\paramBS)$, $(W,\paramLNS)$---each are indeed strong enough to allow FPT complexity.

Consider the connection between \bundlewidth $\paramBW$ and stretch factor $\alpha$. It is natural to assume that the former is dependent on (and directly correlated to) the latter. Surprisingly, the proof of~\Cref{thm:W2hard} shows that this is indeed not always the case. In the above construction, we may pick any natural number $\alpha\geq 3$ upper bounded by some polynomial function in the encoding length $\enc{I^*}$ and replace the length-3 paths over nodes $r',r''$ by paths of length~$\alpha$. The \bundlewidth $\paramBW$ of $I$ will not change and the settling paths for the critical terminal pairs retain length~$3$ (thus \bundlesize $\paramBS\leq 3\paramBW$) with $\paramAP=1$. We thus have:
\begin{observation}\label{obs:largealpha}
    There are instances where the \bundlewidth $\paramBW$, the \bundlesize $\paramBS$ and the \tightness $\paramAP$ are all constant even if $\alpha$ grows with the~instance.
\end{observation}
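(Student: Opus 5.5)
The observation needs only one witness family, and the construction in the proof of \Cref{thm:W2hard}, with its length-$3$ detours stretched, will serve. Fix a \textsc{HittingSet} instance $I^*$ in which every element $r\in\universe$ lies in at least one set. Pick an arbitrary integer $\alpha\geq 3$. Build the instance $I_\alpha$ exactly as in \Cref{thm:W2hard}, with one change: the weight-$0$ detour between $r_1$ and $r_2$ is a path of length $\alpha$ through fresh nodes, instead of the length-$3$ path through $r',r''$. Lengths stay unit, weights stay in $\{0,1,W+1\}$, and the stretch factor is $\alpha$. Letting $\alpha$ grow, for example polynomially in $\enc{I^*}$, gives a family in which $\alpha$ grows with the instance. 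The task is then to show that $\paramBW$, $\paramBS$ and $\paramAP$ stay bounded by constants depending only on $I^*$; taking $I^*$ fixed, or with bounded set sizes, makes them absolute constants.

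The first step is to recompute the trivial and critical edges. All weight-$0$ edges are trivial. Each $\{r_1,r_2\}$ is then settled by its own detour, because that detour has length exactly $\alpha=\alpha\cdot\dist{G}{r_1}{r_2}$. Every edge incident to the detours, as well as $\{u_i,r_1\}$ and $\{r_2,v_i\}$, has weight $0$ and is therefore trivial. Hence $C=E_{\family}$, as before, and $N=E_{\family}\cup E_{\universe}$.

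The second step, which I expect to be the main obstacle, is to enumerate $\pathSet{u_iv_i}$. The danger is that a larger $\alpha$ also admits longer settling paths, up to length $\alpha$. I would first check which $u_iv_i$-paths have length at most $\alpha$:
\begin{itemize}
\item the edge $\{u_i,v_i\}$ itself;
\item the length-$3$ path $u_i,r_1,r_2,v_i$ for each $r\in S_i$;
\item every path routed through a detour, which has length at least $\alpha+2>\alpha$;
\item every path that zigzags through other $u_j,v_j$ or other elements, which has length at least $5$ because the graph is bipartite, and can reach length $\le\alpha$ once $\alpha\ge5$.
\end{itemize}
The last bullet is the real concern. My first attempt is to accept that these paths exist and bound them by the degrees of $I^*$: a simple path of length at most $\alpha$ in this fixed gadget graph without detours is one of finitely many, and the gadget graph does not depend on $\alpha$. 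So $|\pathSet{u_iv_i}|$ is bounded by a function of $I^*$ alone.

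From that bound the rest follows, where $\paramBW=\max_i|\pathSet{u_iv_i}|$. Every settling path uses only nontrivial edges from the fixed set $N=E_{\family}\cup E_{\universe}$, so $\paramBS\leq |N|$ is also independent of $\alpha$. For tightness, $A_{u_iv_i}$ is the length-$3$ path through some $r\in S_i$, which has exactly one nontrivial edge, so $\paramAP=1$. If an absolute constant is preferred over a dependence on $I^*$, the clean choice is $I^*$ with a single set: then the gadget graph is a bounded star-like structure and all three parameters are absolute constants. Meanwhile $\alpha$ is unbounded, which proves the observation.
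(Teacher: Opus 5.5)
Your proof is correct and uses the same construction as the paper: the instance of \Cref{thm:W2hard} with the weight-$0$ detours stretched to length $\alpha$. The key counting step, however, is handled differently.

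The paper simply asserts that the settling paths of the critical pairs keep length $3$. From this it concludes that $\paramBW$ stays what it was for $\alpha=3$ (namely $\max_i|S_i|+1$), that $\paramBS\leq 3\paramBW$, and that $\paramAP=1$. You instead noticed that for $\alpha\geq 5$ longer zigzag paths become settling, and this caution is warranted. If $r\in S_i\cap S_j$ with $i\neq j$, then $u_i,r_1,u_j,v_j,r_2,v_i$ is a settling $u_iv_i$-path of length $5$. So the paper's ``unchanged'' claim tacitly needs pairwise disjoint sets (or $\alpha<5$).

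You get around this by fixing $I^*$. Settling paths never enter a detour, since they would then have length at least $\alpha+2$. Hence $\pathSet{u_iv_i}$ lives in a fixed detour-free gadget graph. This bounds $\paramBW$ independently of $\alpha$ and gives $\paramBS\leq|E_{\family}\cup E_{\universe}|$. The bound $\paramAP=1$ follows from the length-$3$ path together with criticality.

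Your route buys rigor for the existential statement. The price is a degenerate family in which only $\alpha$ varies, with no explicit value of $\paramBW$ tied to the hardness instances.

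You should drop one aside: bounded set sizes alone do not make the parameters absolute constants. With overlapping sets of size $2$ (e.g.\ the edges of a complete graph $K_t$) and large $\alpha$, the number of zigzag settling paths grows with $t$. What does work is pairwise disjoint sets of bounded size, where the only settling paths are the edge itself and the $|S_i|$ paths of length $3$.
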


Consider the relation between the \bundlewidth $\paramBW$ and the \bundlesize $\paramBS$.
Clearly, $\paramBW\leq 2^\paramBS$.
Conversely, $\paramBS$ may be unbounded even for constant $\paramBW$:
a $\pathSet{uv}$ may have few paths, but each of those consist of many edges. 
A combined parameter leveraging $\paramBW$ is algorithmically more versatile than one relying on $\paramBS$, as it captures a larger set of instances.
However, we also consider $\paramBS$ as a parameter as we can achieve an exponential speed-up compared to running the $\paramBW$-parameterized algorithm with the natural bound $2^\paramBS$.
We consider both combined parameterizations $(\paramNTEiS,\paramBW)$ and $(\paramNTEiS,\paramBS)$.
Our algorithm naturally yields a witness (if it exists) in the specified running time.
Both parameterizations are well-suited when only few nontrivial edges must be added to the spanner (e.g.\ in \textsc{SpannerAugmentation} problems), or when feasibility can be decided ``locally'' for each terminal~pair.

\mySubparagraph{FPT algorithm parameterized by $(\paramNTEiS,\paramBW)$.} 
Our algorithm iteratively builds the spanner using a search tree; see Appendix~\ref{apx:Inc} for a recursive pseudo-code implementation.
We say a (partial) solution $H=(V,F)$ is \defn{\green}, if $w(H) \le W$ and $\abs{F \cap \rmE} \le \paramNTEiS$.
Given an instance $I$, we first compute all trivial edges $T \subseteq E$ and all sets $\pathSet{uv}$, $\{u,v\}\in C$, and start with an initial partial solution $H_\mathrm{init}=(V,T)$, answering NO if this subgraph is already not \green.
Clearly, $H_\mathrm{init}$ may leave many critical terminal pairs unsettled.
The algorithm now recursively traverses an implicit search tree, where each search node $\nu$ represents a subgraph $H(\nu)$. The root of our search tree  represents $H_\mathrm{init}$.
At any search node $\nu$, we consider the set of critical terminal pairs $U \subseteq C$ not settled by $H(\nu)$ and select any one $\{u,v\} \in U$.
We create a new search node $\nu'$ for every settling path $P \in \pathSet{uv}$, provided that $H(\nu')\coloneqq(V,E(H(\nu))\cup P)$ is \green.
If this process generates a search node $\nu^*$ where $H(\nu^*)$ settles all terminal pairs $C$, we answer YES; otherwise, we answer NO.

\begin{theorem}\label{thm:f-bw}
    The above algorithm (see also Appendix~\ref{apx:Inc}) is an FPT algorithm 
    that decides the \SPANNER problem parameterized by the maximum number of nontrivial edges in the spanner $\paramNTEiS$ and the \bundlewidth $\paramBW$.
    It runs in $\bigO\big(\paramBW^{\paramNTEiS} \cdot \APSP + nm+n^2\log n\big)$ time, including finding a witness for yes-instances.
\end{theorem}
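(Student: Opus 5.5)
We split the argument into correctness, a bound on the search tree, and the preprocessing cost.

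\emph{Correctness.} We show the search tree preserves the invariant that some optimal spanner is reachable. Take a \green spanner $H^*$, and assume w.l.o.g.\ that it contains all trivial edges $T$; by the definition of trivial edges this does not increase the weight beyond what an optimum needs. Fix a search node $\nu$ with $H(\nu)\subseteq H^*$ and let $\{u,v\}\in U$ be the selected unsettled pair. Since $H^*$ settles $\{u,v\}$, it contains some $P\in\pathSet{uv}$. Then $H(\nu)\cup P\subseteq H^*$ is \green, so the child for $P$ exists. By induction along the tree, a leaf settling all of $C$ is reached. Conversely, every generated node is \green, and a node settling all critical pairs is a spanner: non-critical pairs are already settled by $T$. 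Settling all metric pairs in $K=E$ suffices by~\cite{Peleg1989}. Hence YES answers come with a witness.

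\emph{Size of the search tree.} This is the crux. Branching is at most $\paramBW$ per node, since $|\pathSet{uv}|\le\paramBW$. For the depth, the selected pair $\{u,v\}$ is unsettled in $H(\nu)$, so every $P\in\pathSet{uv}$ contains at least one nontrivial edge not already in $H(\nu)$: the edges of $T$ are all present, so if $P\subseteq E(H(\nu))$ the pair would be settled. Each child therefore strictly increases $|F\cap N|$. Viability caps this at $\paramNTEiS$, so the depth is at most $\paramNTEiS$ and there are $\bigO(\paramBW^{\paramNTEiS})$ search nodes. At each node we must determine the set $U$ of unsettled critical pairs (or just find one). An APSP on $H(\nu)$ followed by checking each $\{u,v\}\in C$ against $\alpha\dist{G}{u}{v}$ does this in $\bigO(\APSP+m)$ time, with $\dist{G}{\cdot}{\cdot}$ precomputed once. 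Constructing children and checking viability costs $\bigO(\paramBW\cdot n)$ per node, which APSP dominates. Together this gives $\bigO(\paramBW^{\paramNTEiS}\cdot\APSP)$.

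\emph{Preprocessing.} Trivial edges and $\dist{G}{\cdot}{\cdot}$ come from $\APtwoSP\in\bigO(nm+n^2\log n)$ (\Cref{section:pre}). The expected obstacle is enumerating $\pathSet{uv}$ for all $\{u,v\}\in C$ within the stated additive bound. My plan is to run the $k$-shortest-paths generalization~\cite{Eppstein1997} from each source with $k=\paramBW+1$ and stop once the next path exceeds $\alpha\dist{G}{u}{v}$. This enumerates all settling paths in increasing length, with at most $\paramBW$ per pair. The cost is $\bigO(nm+n^2\log n+n^2\paramBW)$, and the $n^2\paramBW$ term is absorbed into $\paramBW^{\paramNTEiS}\cdot\APSP$ whenever $\paramNTEiS\ge1$; if $\paramNTEiS=0$ we just test $H_\mathrm{init}$. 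One subtlety is that the procedure of~\cite{Eppstein1997} may yield non-simple walks. I would handle this either by charging their count against the $\paramBW$ budget, noting that a non-simple walk contains a shorter simple settling path, or, failing that, by enumerating simple paths directly via Yen-style deviation within the same parameter-dependent factor.
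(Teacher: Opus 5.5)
Your decomposition is the paper's: branching factor $\paramBW$, depth at most $\paramNTEiS$ because every path in $\pathSet{uv}$ for an unsettled pair brings in a new nontrivial edge, one APSP per search node, and $\pathSet{uv}$ precomputed by an AP$k$SP run whose $n^2\paramBW$ term is absorbed into $\paramBW^{\paramNTEiS}\cdot\APSP$. Your invariant $H(\nu)\subseteq H^*\cup T$ and your separate treatment of $\paramNTEiS=0$ are more careful than the paper, which only says the search space is ``clearly'' covered. The genuine gap is the step you flag yourself. Eppstein's algorithm enumerates shortest \emph{walks}. The paper's proof simply asserts that AP$k$SP with $k=\paramBW$ yields $\pathSet{uv}$, so you have found a point the paper glosses over. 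However, neither of your two remedies closes it.

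Your first remedy fails because the number of settling walks is not bounded by any function of $\paramBW$: shortcutting maps walks to simple paths, but the preimages can be arbitrarily large. Let $\ell(uv)=1$ and $\alpha=4$, let $u$--$y$--$v$ be the only alternative path (two edges of length~$2$), and attach $D\ge1$ pendant neighbours $x_i$ to $u$ by edges of length~$1$. The pendant edges are bridges, hence trivial, so $\paramBW=2$ regardless of $D$. Yet the $D+1$ walks $u\,x_i\,u\,v$ and $u\,v\,u\,v$ of length~$3$ all precede $u\,y\,v$ of length~$4$. So the first $\paramBW+1$ walks miss a settling path, and no $k=f(\paramBW)$ is safe; with the paper's $k=\paramBW$ this already happens for $D=0$.

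Your fallback, running Yen-style simple-path enumeration for every critical pair up front, is correct. But it costs $\bigO(\paramBW\cdot|C|\cdot n(m+n\log n))\subseteq\bigO(\paramBW\, mn(m+n\log n))$. That gives FPT, but it cannot be absorbed into the stated $\bigO(\paramBW^{\paramNTEiS}\cdot\APSP+nm+n^2\log n)$.

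To get the stated bound, compute $\pathSet{uv}$ lazily, only for the pair selected at a search node. Use a $k$-shortest-simple-paths routine and stop as soon as the next path exceeds $\alpha\dist{G}{u}{v}$, which happens after at most $\paramBW+1$ paths. A node with $\paramNTEiS$ nontrivial edges and $U\neq\emptyset$ can be rejected without enumerating, so only $\bigO(\paramBW^{\paramNTEiS-1})$ nodes enumerate. With Yen's $\bigO(kn(m+n\log n))$ this totals $\bigO(\paramBW^{\paramNTEiS}(nm+n^2\log n))$, which is the stated bound for Dijkstra-based $\APSP$. With the undirected $\bigO(k(m+n\log n))$ algorithm of Katoh, Ibaraki and Mine it totals $\bigO(\paramBW^{\paramNTEiS}\cdot\APSP)$. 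The same charging to internal nodes also fixes your per-node claim that the $\bigO(\paramBW n)$ child construction is dominated by $\APSP$. That claim fails when $\paramBW>n$, but the cost sums to $\bigO(\paramBW^{\paramNTEiS}n)$ overall.
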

\begin{proof}
    Clearly, the algorithm considers the whole relevant search space. It remains to argue the running time. 
    For each $\{u,v\}\in C$,~computing $\pathSet{uv}$ can be done via an AP$k$SP-computation (with $k=\paramBW$) in $\bigO(nm+n^2\log n + n^2\paramBW)$ time.
    This dominates the computation of the trivial edges $T$ (cf.\ \Cref{section:pre}).

    Our search tree's branching factor is limited by $\paramBW$.
    Further, we only add paths to a partial solution for currently unsettled terminal pairs.
    Whenever we add a path $P$ to our current solution (i.e., in each recursive step), 
    the subsequent solution contains at least one more nontrivial edge. Thus, the depth of our search tree is bounded by $\paramNTEiS$. %
    Overall, we traverse at most $\paramBW^{\paramNTEiS}$ search nodes.

    In each search node, we have to consider the unsettled terminal pairs $U\subseteq C$. These can be found via a single APSP-computation. Since $\APSP\in\Omega(n^2)$, we have $n^2 \paramBW \leq \paramBW^{\paramNTEiS} \cdot \APSP$ and the initialization together with the search tree traversal yields the statement's running time. Observe that, unfortunately, the additive terms arising from multiple runs of Dijkstra (within the scheme of~\cite{Eppstein1997}) cannot be upper bounded by $\APSP$ since the latter may, in certain scenarios, allow faster algorithms.
\qed \end{proof}

\mySubparagraph{FPT algorithm parameterized by $(\paramNTEiS,\paramBS)$.}
Since $\paramBW\leq 2^\paramBS$ we could use the above algorithm to yield a running time with FPT dependency $2^{\paramBS\paramNTEiS}$. However, a slight modification of the algorithm allows for an exponential speedup in the parameter dependency:

\begin{theorem}\label{thm:f-bs}
    The \SPANNER problem parameterized by 
    the maximum number of nontrivial edges $\paramNTEiS$ and the \bundlesize $\paramBS$ can be decided in $\bigO\big(\paramBS^{\paramNTEiS} \cdot \APSP + m^3+m^2 n \log n\big)$ time (including finding a witness if it exists).
\end{theorem}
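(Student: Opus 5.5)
The plan is to modify the search tree of \Cref{thm:f-bw} so that each branching step adds a single nontrivial \emph{edge} instead of a whole settling path. Fix the current search node $\nu$ with partial solution $H(\nu)$ and an unsettled critical pair $\{u,v\}\in U$. Let $B_{uv}\coloneqq \bigcup_{P\in\pathSet{uv}} P\cap N$ be its bundle of nontrivial edges, so $|B_{uv}|\leq\paramBS$. Every settling path $P$ that any completion of $H(\nu)$ uses for $\{u,v\}$ has at least one nontrivial edge not yet in $H(\nu)$; otherwise $\{u,v\}$ would already be settled, since all trivial edges are in $H(\nu)$ and the remaining edges of $P$ are already present. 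We therefore branch over every $e\in B_{uv}\setminus E(H(\nu))$ and add just $e$ to obtain $H(\nu')$, keeping only \green children. This gives branching factor at most $\paramBS$. Each step adds exactly one nontrivial edge, so the depth is at most $\paramNTEiS$, and at most $\paramBS^{\paramNTEiS}$ search nodes are traversed.

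For correctness I would argue by induction. If some \green spanner $H^*\supseteq H(\nu)$ exists, take any settling path $P\subseteq H^*$ for $\{u,v\}$. Then $P\cap N\subseteq B_{uv}$, and $P$ contains some $e\in (P\cap N)\setminus E(H(\nu))$. The child for $e$ satisfies $H(\nu')\subseteq H^*$ and is \green, because $H^*$ is \green and \greenness is subgraph-monotone. A leaf that settles all of $C$ is a witness. Conversely, any reported leaf is a feasible spanner. Termination holds because the depth bound follows from the viability check. In each search node, a single APSP computation determines $U$, which gives the $\paramBS^{\paramNTEiS}\cdot\APSP$ term.

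The main obstacle is the preprocessing: computing every $B_{uv}$ for $\{u,v\}\in C$ in $\bigO(m^3+m^2n\log n)$ time without enumerating the possibly exponentially many settling paths. I would use the following characterization. An edge $e=\{x,y\}$ lies on some settling $uv$-path iff $\min\{\dist{G}{u}{x}+\ell(e)+\dist{G}{y}{v},\ \dist{G}{u}{y}+\ell(e)+\dist{G}{x}{v}\}\leq\alpha\dist{G}{u}{v}$. The difficulty is that $\pathSet{uv}$ contains only \emph{simple} paths, and concatenating shortest walks may repeat nodes. Shortcutting a repeated node produces a shorter simple path, but that path may drop $e$.

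To handle this, for each of the at most $m$ pairs $\{u,v\}$ and each of the $m$ edges $e$, I would ask whether there is a shortest simple $uv$-path through $e$. One route is to run Dijkstra from $u$ and from $v$ and, for each $e$, check node-disjointness of suitable shortest paths. Another route is to phrase it as a min-cost disjoint-paths instance (two vertex-disjoint paths $u\to x$, $y\to v$) solvable with a Suurballe-type computation in $\bigO(m+n\log n)$ per $(\{u,v\},e)$. Summed over $m^2$ combinations, this gives the $\bigO(m^2 n\log n)$ term, and $m^3$ covers the bookkeeping and set unions. This disjointness step is where I expect the real care to be needed. Adding the search-tree cost then yields the stated bound.
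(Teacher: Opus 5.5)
Your proposal is correct and follows the paper's proof: you branch on the at most $\paramBS$ nontrivial edges of the unsettled pair's bundle (restricting to edges not yet in $H(\nu)$ makes the depth bound $\paramNTEiS$ airtight, which the paper leaves implicit), and you compute each bundle with one Suurballe-type disjoint-paths computation per (critical pair, nontrivial edge), which gives the $\bigO(m^2(m+n\log n))$ preprocessing. Drop your first preprocessing route, though: checking node-disjointness of particular shortest $ux$- and $yv$-paths cannot decide membership, since those may intersect while longer disjoint ones still give a settling path, so only the disjoint-paths formulation is sound.
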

\begin{proof}
    Consider the algorithm of \Cref{thm:f-bw}; we modify the strategy to yield new search nodes. Let $H$ be the current partial solution and $\{u,v\}\in U \subseteq C$ an unsettled terminal pair. The set $Q_{uv}\coloneqq\bigcup_{P\in\pathSet{uv}} P\cap N$, containing all nontrivial edges in settling paths for $\{u,v\}$, has size $|Q_{uv}|\leq\paramBS$. We know that at least one of its elements needs to be contained in the final solution, so we can generate new search nodes for each \green $H'=H+e$, $e\in Q_{uv}$. We have branching factor at most $\paramBS$ and a search tree depth of at most $\paramNTEiS$.

    It remains to argue that we can find $Q_{uv}$ without considering the potentially $2^\paramBS$-sized set $\pathSet{uv}$ nor solving a $2^\paramBS$-shortest-paths problem. 
    For each $\{u,v\} \in C\subseteq E$ and every $\{x,y\} \in N\subseteq E$, we can use Suurballe's  algorithm~\cite{Suurballe1974} to determine an $ux$-path $P_1$ and a $yv$-path $P_2$ in $\bigO(m + n \log n)$ time, where $P_1$ and $P_2$ are node-disjoint and the concatenation $\bar{P}_{xy} = P_1 + \{x,y\} + P_2$ is a shortest simple $uv$-path that traverses the edge $\{x,y\}$ (in that direction). 
    Analogously, let $\bar{P}_{yx}$ be the path when traversing $\{x,y\}$ in the other direction.
    Thus, $\{x,y\} \in Q_{uv}$ if and only if $\min\{\ell(\bar{P}_{xy}),\ell(\bar{P}_{xy})\} \leq \alpha \cdot \dist{G}{u}{v}$. 
    Computing all $Q_{uv}$ takes $\bigO(m^2(m + n \log n))$ time and the claimed total running time follows.
\qed \end{proof}

\begin{corollary}\label{cor:lns}
    The above algorithm is an FPT algorithm w.r.t.\ $(\paramNTEiS,\paramLNS)$, where $\paramLNS$ is the local neighborhood size, with running time $\bigO\big(\paramLNS^{2\paramNTEiS} \cdot \APSP + m^3+m^2 n \log n\big)$, since $\paramBS\leq \paramLNS^2$.
\end{corollary}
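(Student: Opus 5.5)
The claim follows directly from \Cref{thm:f-bs}; it only remains to bound $\paramBS$ by a function of $\paramLNS$ and substitute. Fix any critical pair $\{u,v\}\in C$. Let $V_{uv}\coloneqq\bigcup_{P\in\pathSet{uv}}V(P)$, so $|V_{uv}|\leq\paramLNS$ by definition of the local neighborhood size. Every edge of $Q_{uv}=\bigcup_{P\in\pathSet{uv}}P\cap N$ lies on some settling path $P$, so both of its endpoints lie in $V(P)\subseteq V_{uv}$. Hence $Q_{uv}\subseteq E(G[V_{uv}])$.

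Since $G$ is a simple undirected graph, $G[V_{uv}]$ has at most $\binom{|V_{uv}|}{2}\leq\binom{\paramLNS}{2}\leq\paramLNS^2$ edges. Taking the maximum over $C$ gives $\paramBS\leq\paramLNS^2$. This is the bound already noted when the \bundlesize was introduced.

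Substituting into the running time of \Cref{thm:f-bs} gives $\paramBS^{\paramNTEiS}\leq(\paramLNS^2)^{\paramNTEiS}=\paramLNS^{2\paramNTEiS}$. The additive terms $m^3+m^2n\log n$ are unchanged, since the algorithm is the same: it computes the sets $Q_{uv}$ via Suurballe's algorithm and then branches on them with depth at most $\paramNTEiS$. The algorithm never needs $\paramLNS$ explicitly; the new parameter only enters the analysis. Because $\paramLNS^{2\paramNTEiS}$ is a computable function of the combined parameter $(\paramNTEiS,\paramLNS)$, the algorithm is FPT with respect to it. Correctness and witness extraction carry over unchanged from \Cref{thm:f-bs}.

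The only point to check is that the edges counted in $\paramBS$ really lie inside $G[V_{uv}]$. This holds immediately because every such edge lies on a path whose node set is counted in $\paramLNS$.
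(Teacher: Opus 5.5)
Your proposal is correct and follows essentially the same route as the paper: you plug the bound $\paramBS\leq\binom{\paramLNS}{2}\leq\paramLNS^2$ into the running time of \Cref{thm:f-bs}, with the algorithm itself unchanged. The only addition is that you spell out why that bound holds (every nontrivial edge on a settling path has both endpoints among the at most $\paramLNS$ visited nodes), which the paper just asserts as ``clear'' when it introduces the \bundlesize.
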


Recall from the proof of~\Cref{obs:unitlength}, that in unit-length graphs with maximum degree $\maxDeg$ we have bounded $\paramBW \leq (\maxDeg-1)^{\lceil \alpha / 2 \rceil} +1\leq \maxDeg^\alpha$.
Furthermore, in such graphs APSP only requires $\bigO(nm)$ time by $n$ breadth-first-searches. From \Cref{thm:f-bw} it thus follows:
\begin{corollary}\label{cor:ulengthfpt}
    The unit-length (and thus also basic) \SPANNER problem can be solved in $\bigO\left(((\maxDeg-1)^{\lceil \alpha / 2 \rceil} +1)^{\paramNTEiS} \cdot nm+n^2\log n\right) \subseteq \bigO \left( \maxDeg^{\alpha \paramNTEiS} \cdot nm+n^2\log n \right)$ FPT time when parameterized by $(\paramNTEiS,\alpha,\maxDeg)$.
\end{corollary}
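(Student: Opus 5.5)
The plan is to instantiate \Cref{thm:f-bw} with an explicit bound on the \bundlewidth $\paramBW$ in terms of $\alpha$ and $\maxDeg$. Two things must be checked. First, every critical pair has at most $(\maxDeg-1)^{\lceil\alpha/2\rceil}+1$ settling paths. Second, in the unit-length setting the generic APSP and AP$k$SP terms in the running time of \Cref{thm:f-bw} collapse to $\bigO(nm)$-type bounds.

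For the bound on $\paramBW$, fix a critical pair $\{u,v\}\in C$. By \Cref{section:pre} we may assume $\{u,v\}$ is a metric edge, so $\dist{G}{u}{v}=1$ under unit lengths, and a settling path has length at most $\alpha$. The edge $\{u,v\}$ itself is one settling path; this accounts for the $+1$. Every alternative settling path is a simple $uv$-path of length $k$ with $2\le k\le\lfloor\alpha\rfloor$.

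To count these paths, I would use the counting argument the paper refers to in the proof of \Cref{obs:unitlength}. The simplest version is a walk count: after leaving $u$ there are at most $\maxDeg-1$ choices at each subsequent node, because the walk cannot return along the edge it arrived on or use $\{u,v\}$. Counting this way forward from $u$ and backward from $v$, and meeting in the middle, should give roughly $\lceil\alpha/2\rceil$ free branching steps. The step I expect to be the main obstacle is matching the exact constant $(\maxDeg-1)^{\lceil\alpha/2\rceil}$. If that cannot be matched, I would settle for the cruder bound $\paramBW\le\maxDeg^{\alpha}$, which still suffices for the FPT claim and for the second containment in the statement. For that weaker bound, a path of length $k$ is determined by its $k-1$ interior choices, each with at most $\maxDeg-1$ options, and summing over $k\le\alpha$ gives at most $\maxDeg^{\alpha}$. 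The final inequality $(\maxDeg-1)^{\lceil\alpha/2\rceil}+1\le\maxDeg^{\alpha}$ is routine.

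For the running time, unit lengths let APSP be computed by $n$ breadth-first searches in $\bigO(nm)$ time. For the AP$k$SP computation with $k=\paramBW$, I would instead enumerate settling paths directly by a bounded-depth search from each critical pair. Alternatively, I would accept the $\bigO(nm+n^2\log n+n^2\paramBW)$ term from \Cref{thm:f-bw} and absorb $n^2\paramBW$ into $\paramBW^{\paramNTEiS}\cdot nm$, using $n\le m+1$ since $G$ is connected. Substituting these bounds into \Cref{thm:f-bw} gives $\bigO\big(((\maxDeg-1)^{\lceil\alpha/2\rceil}+1)^{\paramNTEiS}\cdot nm+n^2\log n\big)$. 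This depends only on $(\paramNTEiS,\alpha,\maxDeg)$ times a polynomial, so the algorithm is FPT for this combined parameter.
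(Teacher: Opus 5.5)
Your route is the paper's: bound $\paramBW$ in terms of $\maxDeg$ and $\alpha$, compute APSP by BFS, and plug both into \Cref{thm:f-bw}. You were right to flag the sharp bound as the obstacle, but it cannot be completed, because $\paramBW\le(\maxDeg-1)^{\lceil\alpha/2\rceil}+1$ is false. Meeting in the middle only limits how far each half of a path must reach. A settling path is a \emph{pair} of halves, and one half-path from $u$ can be completed towards $v$ in many ways, so the counts multiply. The paper's justification in the proof of \Cref{obs:unitlength}, on which its derivation of this corollary rests, has exactly this gap.

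Concretely, take a basic instance with $\alpha=9$, consisting of an edge $\{u,v\}$ and layers $X_1,\dots,X_8$ of two nodes each. Make $u$ complete to $X_1$, each $X_i$ complete to $X_{i+1}$, and $X_8$ complete to $v$. Then $\maxDeg=4$. Every edge other than $\{u,v\}$ lies on a $4$-cycle and $\{u,v\}$ lies on a $10$-cycle, so no edge is mandatory, $T=\emptyset$, and $\{u,v\}\in C$. Every $uv$-path avoiding $\{u,v\}$ has length at least $9$, and the length-$9$ ones traverse the layers in order. Hence $\{u,v\}$ has exactly $2^8=256$ alternative settling paths, so $\paramBW\ge257>3^5+1=(\maxDeg-1)^{\lceil\alpha/2\rceil}+1$. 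More generally, $\lfloor\alpha\rfloor-1$ layers of size $\lfloor\maxDeg/2\rfloor$ give $\paramBW\ge\lfloor\maxDeg/2\rfloor^{\lfloor\alpha\rfloor-1}+1$. For every fixed integer $\alpha\ge4$ this is not $\bigO((\maxDeg-1)^{\lceil\alpha/2\rceil})$, so the first running-time expression in the statement does not follow from \Cref{thm:f-bw}.

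Your fallback proves the version that is actually supported. Counting the $k-1$ interior choices gives $\paramBW\le\sum_{j=0}^{\lfloor\alpha\rfloor-1}(\maxDeg-1)^j\le\maxDeg^{\lfloor\alpha\rfloor-1}\le\maxDeg^{\alpha}$. Combine this with BFS-based APSP, and absorb the $n^2\paramBW$ term as you describe (valid for $\paramNTEiS\ge1$). Then \Cref{thm:f-bw} yields $\bigO(\maxDeg^{\alpha\paramNTEiS}\cdot nm+n^2\log n)$ and the FPT claim for $(\paramNTEiS,\alpha,\maxDeg)$. The construction above shows that an exponent of order $\alpha-1$, not $\lceil\alpha/2\rceil$, is unavoidable for this approach. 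So drop the attempt to match $(\maxDeg-1)^{\lceil\alpha/2\rceil}$ and state only the weaker bound.
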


As noted initially, the weight limit $W$ of the spanner itself can be a sensible parameter if the instance contains edges $e \in E$ with $w(e)=0$ (i.e., in the decoupled or unit-length setting). Since $\paramNTEiS\leq W$, we achieve an FPT algorithm using the natural parameterization~$W$\rlap{:}
\begin{corollary}\label{cor:W}
    In 
    Statements \ref{thm:f-bw}--\ref{cor:ulengthfpt},
    we can substitute the parameter $\paramNTEiS$ by $W$.
\end{corollary}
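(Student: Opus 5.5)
The plan is to show that a yes-instance with respect to $W$ automatically has a witness whose number of nontrivial edges is bounded by $W$. Then each of the algorithms from \Cref{thm:f-bw}, \Cref{thm:f-bs}, \Cref{cor:lns} and \Cref{cor:ulengthfpt}, run with $\paramNTEiS \coloneqq W$, decides the same question within the same running-time bound with $\paramNTEiS$ replaced by $W$.

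The first step is to normalize the trivial edges. Every trivial edge may w.l.o.g.\ be assumed to lie in an optimum solution. So whenever a spanner $H=(V,F)$ with $w(H)\le W$ exists, some optimum spanner $H^*\supseteq (V,T)$ also has $w(H^*)\le W$.

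The second step bounds the nontrivial edges of $H^*$. All weight-$0$ edges belong to $T$, so every edge $e\in N$ has $w(e)\ge 1$. Hence
\[ |F^*\cap N| \le w(F^*\cap N) \le w(H^*) \le W. \]
Therefore $H^*$ is \green for the parameter value $\paramNTEiS=W$, i.e., it satisfies both $w(H^*)\le W$ and $|F^*\cap N|\le W$. Conversely, any spanner the algorithm outputs already satisfies $w(H)\le W$, because viability checks the weight bound explicitly. So the answer is unchanged.

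It remains to check that the search-tree argument still works with this parameter choice. The tree depth is bounded by the number of nontrivial edges in a viable partial solution, which is now at most $W$. The branching factor ($\paramBW$, $\paramBS$, or the degree-based bound) is unaffected. Substituting gives the claimed bounds, for instance $\bigO(\paramBW^{W}\cdot\APSP+nm+n^2\log n)$.

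The only delicate point is the w.l.o.g.\ assumption on trivial edges, specifically for mandatory edges of positive weight. These lie in every feasible spanner anyway, so including them changes nothing. Weight-$0$ edges can be added without increasing the weight and without breaking the spanner property, since that property is supergraph-monotone. With this settled, the rest is a direct substitution.
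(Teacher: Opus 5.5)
Your proposal is correct and follows the paper's argument. The paper justifies the corollary by noting that $\mu \le W$: every weight-$0$ edge is trivial, so every nontrivial edge has weight at least $1$, and hence running the algorithms of Statements~\ref{thm:f-bw}--\ref{cor:ulengthfpt} with $\mu \coloneqq W$ keeps both correctness and the search-tree depth bound. Your extra step of adding the trivial edges to a witness (weight-$0$ edges via supergraph-monotonicity, mandatory edges because every spanner already contains them) makes explicit what the paper leaves implicit.
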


\bibliographystyle{splncs04}
\bibliography{literature}

\clearpage
\appendix

\section*{Appendix}
\section{Omitted Proofs}\label{apx:MissingProofs}

\setcounter{theorem}{1}
\begin{lemma}
    Let $G=(V,E)$ be an $n$-node graph with $|V(G[V'])| \geq |E(G[V'])|$ for all $V' \subseteq V$, where $G[V'] \subseteq G$ is the node-induced subgraph. Then $G$ has an independent set of size at least $\lceil \frac{n}{3}\rceil$, which can be found in $\bigO(n)$ time.
\end{lemma}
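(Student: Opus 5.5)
The hypothesis says every node-induced subgraph has at most as many edges as nodes, i.e.\ $G$ is a pseudoforest: every connected component contains at most one cycle. (If a component had two independent cycles, the union of these cycles, together with a connecting path if needed, would be a vertex set inducing a subgraph with more edges than nodes.) I would work component by component and show that each component $K$ with $n_K$ nodes has an independent set of size at least $\lceil n_K/3\rceil$. Summing over components then gives at least $\sum_K \lceil n_K/3\rceil \geq \lceil n/3\rceil$. Note that the graph $\auxG$ in the application may have isolated nodes and loops or parallel edges; a loop would violate independence of its node, but in $\auxG$ a loop cannot arise since $\{u,v\}\notin A^N_{uv}$, so I would assume $G$ is simple apart from possible parallel edges, which only decrease the cycle budget.

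\textbf{Trees.} A tree (including isolated nodes) is bipartite, so its larger color class has at least $\lceil n_K/2\rceil\geq\lceil n_K/3\rceil$ nodes. Such a class is found by BFS in $\bigO(n_K)$ time.

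\textbf{Unicyclic components.} Let $K$ be unicyclic with cycle $Z$, and pick a node $z\in Z$ together with one of its cycle edges $e=\{z,z'\}$. Then $K-e$ is a tree. Two-color it and take the larger class $S$, of size at least $\lceil n_K/2\rceil$. If $z$ and $z'$ received different colors, $e$ causes no conflict and $S$ is independent in $K$. Otherwise, removing $z$ from $S$ yields an independent set of size at least $\lceil n_K/2\rceil-1$. For $n_K\geq 3$ this is at least $\lceil n_K/3\rceil$, which I would verify by a short check of small $n_K$ and the general inequality. For $n_K=2$ (a double edge) a single node suffices. For $n_K=1$ the cycle would be a loop, excluded as discussed above.

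\textbf{Running time.} Since $|E|\leq|V|$, all of these steps (finding components, locating a cycle edge via DFS, BFS coloring) run in $\bigO(n)$ time.

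\textbf{Main obstacle.} The step I expect to need the most care is the characterization that the edge-count condition forces each component to have at most one cycle. For that I would argue via the cyclomatic number $|E(K)|-|V(K)|+1\leq 1$, applied to $V'=V(K)$. The remainder is bookkeeping on the ceiling bounds.
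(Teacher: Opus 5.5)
Your overall structure matches the paper's: reduce to components, bound the cyclomatic number by taking $V'=V(K)$, 2-color trees, and break the unique cycle in a unicyclic component. However, the unicyclic case has a concrete numerical gap. The inequality $\lceil n_K/2\rceil - 1 \geq \lceil n_K/3\rceil$, which you claim for all $n_K\geq 3$, is false at $n_K=4$: the left side is $1$ and the right side is $2$.

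This case actually occurs. Take a triangle $z,z',c$ with a pendant node $d$ attached to $z$, and let $e=\{z,z'\}$. Then $K-e$ is the path $d$--$z$--$c$--$z'$, whose color classes $\{z,z'\}$ and $\{c,d\}$ tie. If you pick $S=\{z,z'\}$, your rule returns $\{z'\}$, of size $1<\lceil 4/3\rceil$. So your ``short check of small $n_K$'' would fail, and the lemma is not established for $4$-node unicyclic components.

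The repair is cheap. The paper deletes the cycle \emph{node} $z$ rather than the cycle edge. Then $K-z$ is a forest on $n_K-1$ nodes, and its larger color class has size $\lceil (n_K-1)/2\rceil \geq \lceil n_K/3\rceil$ for every $n_K\geq 2$. The paper also settles $n\leq 4$ by hand: for instance, a $4$-node graph with at most $4$ of the $6$ possible edges has a non-adjacent pair.

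Alternatively, you can keep your edge-deletion version. When $z$ and $z'$ both lie in $S$, take the larger of $S\setminus\{z\}$ and the opposite class. The opposite class is independent in $K$ because $e$ lies inside $S$. Since $(|S|-1)+(n_K-|S|)=n_K-1$, this gives $\max(|S|-1,\,n_K-|S|)\geq\lceil (n_K-1)/2\rceil$.

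With either fix, the rest of your argument goes through: the pseudoforest characterization, summing ceilings over components, and the linear running time.
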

\begin{proof}%
    For a disconnected graph $G$, if the statement holds for each component, it also holds for full $G$. Assume connected $G$. 
    For $n\leq 3$ picking any single node suffices; for $n=4$, there always is a pair of nodes $u,v$ with $\{u,v\} \notin E$.
    For $n> 4$, $n-1 \le |E| \le n$ by connectivity and the assumption $|E| \leq |V| = n$.
    For $|E| = n-1$, $G$ is a tree and thus bipartite. Taking the larger of the two partition sets yields an independent set of size at least $\lceil \frac{n}{2} \rceil$.
    For $|E| = n$, the graph contains exactly once cycle $C$.
    By removing any one node $v\in V(C)$ from $G$ (including its at least 2 incident edges), we may apply the previous arguments to the remaining graph on $n-1$ nodes ($n\geq4$), giving an independent set of size at least $\lceil \frac{n-1}{2} \rceil  \geq \lceil \frac{n}{3} \rceil$. In either case, a simple depth-first-search traversal of the graph suffices to identify an independent set of the specified size. \qed
\end{proof}

\setcounter{theorem}{9}
\begin{observation}
    The \SPANNER problem is \paraNP-hard if parameterized by the \bundlewidth $\paramBW$ and/or the \bundlesize $\paramBS$ and/or the local neighborhood size $\paramLNS$; this holds even for basic instances and constant $\alpha$.
\end{observation}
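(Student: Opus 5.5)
The plan is to reduce from a known \NP-hardness result and show that, on the instances it produces, all three parameters are bounded by a constant. Since para-\NP-hardness only requires \NP-hardness for a fixed constant parameter value, this suffices. I will use the \NP-hardness of the basic \SPANNER problem for constant $\alpha$ (e.g., $\alpha=2$) on graphs of constant maximum degree $\maxDeg$~\cite{cai1994spanners,Gomez2023}. It then remains to bound $\paramBW$, $\paramBS$ and $\paramLNS$ by functions of $(\alpha,\maxDeg)$ alone. The statement's ``and/or'' phrasing is then immediate: any combination of constant-bounded parameters is still constant.

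The key step is to bound the \bundlewidth. Fix a critical pair $\{u,v\}\in C$; it is an edge, and in a basic instance $\dist{G}{u}{v}=1$. Every settling path $P\in\pathSet{uv}$ therefore has at most $\alpha$ edges. One such path is the edge $\{u,v\}$ itself. Every other path is an alternative $uv$-path of length between $2$ and $\alpha$.

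To count these alternative paths, I split each one at its middle vertex. The first $\lceil \alpha/2\rceil$ steps from $u$ form a walk in which the first step has at most $\maxDeg-1$ choices (excluding the edge to $v$), and each subsequent step has at most $\maxDeg-1$ choices since the walk cannot return along the edge it arrived on. Extending the path toward $v$ gives the same count. A slightly generous estimate therefore gives $\paramBW \le 1+\sum_{k=2}^{\alpha}(\maxDeg-1)^{k-1}$, which is a constant. A more careful meet-in-the-middle count would yield the form $(\maxDeg-1)^{\lceil\alpha/2\rceil}+1$ quoted later in the paper. For para-\NP-hardness, however, any constant bound suffices, so I will not insist on the sharp form. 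This counting is the only step I expect to need care; the main risk is getting the exact sharp expression right, which is not needed here.

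The other two parameters follow from the first. Each settling path has at most $\alpha$ edges and $\alpha+1$ nodes. Hence $\paramBS\le \alpha\,\paramBW$ and $\paramLNS\le(\alpha+1)\paramBW$. Alternatively, all settling paths stay within distance $\alpha$ of $u$, which gives $\paramLNS\le 1+\maxDeg+\dots+\maxDeg^{\alpha}$. Both bounds are constants for constant $\alpha$ and $\maxDeg$.

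Finally, any FPT (or even XP) algorithm for one of these parameters would run in polynomial time on this \NP-hard class of instances, which would imply $\mathbf{P}=\NP$. The construction produces basic instances with constant $\alpha$, as claimed.
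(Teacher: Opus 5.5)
Your proof is correct and follows the paper's route: \NP-hardness of basic \textsc{Spanner} for constant $\alpha$ and constant maximum degree $\Delta$, plus bounds on $\paramBW$, $\paramBS$, $\paramLNS$ that depend only on $(\alpha,\Delta)$. The paper simply uses $\Delta^{\alpha}$; your $1+\sum_{k=2}^{\alpha}(\Delta-1)^{k-1}$ is a valid explicit version, coming from fixing the first $k-1$ edges of a non-backtracking path rather than from a middle-vertex split.

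Do drop the aside that a meet-in-the-middle count yields $(\Delta-1)^{\lceil\alpha/2\rceil}+1$. For a counterexample, take $\alpha=\Delta=4$. Let $u$ have further neighbors $x_1,x_2,x_3$ and $v$ have further neighbors $z_1,z_2,z_3$. Add four middle nodes, each adjacent to two of the $x_i$ and two of the $z_j$, distributed so that all degrees stay at most $4$. Then no edge is mandatory, so $\{u,v\}\in C$. It has $16$ alternative settling paths of length $4$, so $|\mathcal{P}_{uv}|=17>(\Delta-1)^2+1=10$, and your own bound is the one to rely on.
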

\begin{proof}
The basic \SPANNER problem remains \NP-hard for constant $\alpha$ and constant maximum degree $\maxDeg$~\cite{cai1994spanners,Gomez2023}. In unit-length graphs, any settling path for a critical terminal pair $\{u,v\}\in C$ has maximum length $\alpha$, and thus trivially 
$\max\{\paramBW,\paramBS,\paramLNS\}\leq \Delta^{\alpha}$.
So there are hard instances for constant such parameter values.
Indeed, $\paramBW\leq (\maxDeg-1)^{\lceil \alpha / 2 \rceil} +1$: besides the edge $\{u,v\}$ itself, by starting at $u$ (and reversely at $v$) we may have at most up to $\maxDeg-1$ choices to extend paths at any node; after at most $\lfloor\alpha / 2 \rfloor$ steps, the two path-trees need to join up.\qed
\end{proof}

\clearpage
\section{Full Exclusion Algorithm of \Cref{section:alg_exclusion}}\label{apx:Ex}
\begin{algorithm}[h]
\DontPrintSemicolon
    \caption{FPT algorithm parameterized by $(\paramRW,\paramAP)$ for instance $(G=(V,E),w,\ell,\alpha,W)$}
    \label{alg:exclusion}
    \SetKwFunction{findWitness}{findWitness}
    \SetKwFunction{main}{main}
    \Fn{\main{}}{
    identify nontrivial edges $N\subseteq E$\\
    \eIf{$|N| \leq \paramAP^2 \paramRW^2$}{
        \ForEach(\Comment*[f]{enumerate all possibilities}){$\noF \subseteq N$ with $|\noF| \leq \paramRW$}{%
            \If{$w(\noF)\geq\paramRW$ and $H = (V, E \setminus \noF)$ is a feasible spanner}{
                \Return YES \Comment*[r]{$H$ is a feasible spanner of suitable weight}
            }
        }
        \Return NO
    }{
        \emph{optional:} $H$ $\gets$ \findWitness{}\\
        \Return YES
    }
    }
    \BlankLine
    \Fn{\findWitness{}}{
        $A_{uv}$ $\gets$ tightest alternative path for each $\{u,v\} \in C$\\
        \If{\paramAP=1}{
            construct conflict graph $\auxG\coloneqq (N, \bigcup_{\{u,v\}\in C, e\in A^N_{uv}} \{ \{u,v\}, e \})$\\
            find independent set $\noF$ in $\auxG$ using \Cref{lem:IS}\\
            \Return $(V,E\setminus \noF)$ \Comment*[r]{feasible spanner of suitable weight}
            }
        $L_0 \gets \emptyset$, $\noF \gets \emptyset$\\
        \For{$i = 1, \dots, \paramRW$}{
            $R_i$ $\gets$ any subset of $N \setminus L_{i-1}$ with $|R_i|=\paramAP\paramRW$\\
            $L_i \gets L_{i-1} \cup \{A^N_{uv} \cup \{\{u,v\}\} \mid \{u,v\} \in R_i \cap C\}$
        }
        \For{$i = \paramRW, \dots, 1$}{
            $\{u_i,v_i\}$ $\gets$ any edge in $R_i \setminus \bigcup_{j>i} A^N_{u_jv_j}$\\
            add $\{u_i,v_i\}$ to $\noF$
        }
        \Return $(V,E\setminus\noF)$ \Comment*[r]{feasible spanner of suitable weight}
    }
\end{algorithm}

\clearpage
\section{Full Inclusion Algorithm of \Cref{section:alg_inclusion}}\label{apx:Inc}

\begin{algorithm}[h]
    \caption{FPT algorithm parametrized by $(\paramNTEiS,\paramBW)$ for instance $(G=(V,E),w,\ell,\alpha,W)$}
    \label{alg:inclusionFramework}
    \SetKwFunction{step}{recursivesearch}
    \SetKwFunction{main}{main}
    \SetKw{KwAbstract}{abstract}

    \Fn{\main{}}{
        compute trivial edges $T$ and $\pathSet{uv}$ for every $\{u,v\} \in C$\\
        $H_\mathrm{init}\gets(V,T)$\\
        \If{$H_\mathrm{init}$ is not \green}        {\Return NO}
        \Return \step{$H_\mathrm{init}$}
    }
    \BlankLine
    \Fn{\step{$H$}}{
        $U \gets \{\{u,v\} \in C \mid \dist{H}{u}{v} > \alpha \dist{G}{u}{v}\}$\\
        \If{$U=\emptyset$}{\Return YES\Comment*[r]{$H$ is a feasible spanner of suitable weight}}
        choose any $\{u,v\} \in U$\\
        \ForEach{$P \in \pathSet{uv}$}{\label{line:allSets}
            $H' \gets H + P$\\
            \If{$H'$ is \green}{
                \Return \step{$H'$} 
            }
        }
        \Return NO
    }
\end{algorithm}

\end{document}